\documentclass[a4paper,11pt]{article}
\usepackage{enumitem}
\usepackage{float}
\usepackage[margin=2.5cm]{geometry}
\usepackage[caption=false]{subfig}
\usepackage{graphicx}
\usepackage[labelfont=bf,labelsep=period]{caption}
\usepackage{times}
\usepackage{url}
\usepackage{xspace}
\usepackage{numbertabbing}

\setlist[]{topsep=2pt,partopsep=2pt,parsep=2pt,itemsep=2pt}

\floatstyle{ruled}
\newfloat{algo}{htbp}{algo}
\floatname{algo}{Algorithm}

\usepackage{amsmath} 
\setcounter{MaxMatrixCols}{20}

\allowdisplaybreaks[2]          

\usepackage{amssymb} 

\usepackage{amsthm} 
\newtheoremstyle{plain-boldhead}
  {\topsep}
  {\topsep}
  {\itshape}
  {}
  {\bfseries}
  {.}
  { }
  {\thmname{#1}\thmnumber{ #2}\thmnote{ (\bfseries #3)}}
\newtheoremstyle{definition-boldhead}
  {\topsep}
  {\topsep}
  {\normalfont}
  {}
  {\bfseries}
  {.}
  { }
  {\thmname{#1}\thmnumber{ #2}\thmnote{ (\bfseries #3)}}
\theoremstyle{plain-boldhead}
\newtheorem{theorem}{Theorem}

\newtheorem{lemma}[theorem]{Lemma}

\theoremstyle{definition-boldhead}
\newtheorem{definition}{Definition}

\floatstyle{ruled}
\newfloat{algo}{htbp}{algo}
\floatname{algo}{Algorithm}

\newcommand{\etal}{\textit{et~al.}\@}

\newcommand{\str}[1]{\textsc{#1}}
\newcommand{\var}[1]{\textit{#1}}
\newcommand{\op}[1]{\textsl{#1}}
\newcommand{\msg}[2]{\ensuremath{\ifempty{#2} [\str{#1}] \else [\str{#1}, {#2}] \fi}}

\newcommand{\false}{\textsc{false}\xspace}

\newcommand{\BZ}{\ensuremath{\mathbb{Z}}\xspace}
\newcommand{\BN}{\ensuremath{\mathbb{N}}\xspace}

\newcommand{\CA}{\ensuremath{\mathcal{A}}\xspace}

\newcommand{\CF}{\ensuremath{\mathcal{F}}\xspace}

\newcommand{\CK}{\ensuremath{\mathcal{K}}\xspace}
\newcommand{\CM}{\ensuremath{\mathcal{M}}\xspace}

\newcommand{\CP}{\ensuremath{\mathcal{P}}\xspace}
\newcommand{\CQ}{\ensuremath{\mathcal{Q}}\xspace}

\newcommand{\CX}{\ensuremath{\mathcal{X}}\xspace}

\newcommand{\ceil}[1]{\lceil #1 \rceil}


\newcommand\viewNumber{\var{viewNumber}\xspace}
\newcommand\curView{\var{curView}\xspace}

\newcommand\node{\var{node}\xspace}
\newcommand\justify{\var{justify}\xspace}
\newcommand\type{\var{type}\xspace}
\newcommand\highQC{\var{highQC}\xspace}
\newcommand\prepareQC{\var{prepareQC}\xspace}
\newcommand\precommitQC{\var{precommitQC}\xspace}
\newcommand\commitQC{\var{commitQC}\xspace}
\newcommand\lockedQC{\var{lockedQC}\xspace}

\newcommand\newview{\var{new-views}\xspace}
\newcommand\prepares{\var{prepare-votes}\xspace}
\newcommand\precommits{\var{precommit-votes}\xspace}
\newcommand\commits{\var{commit-votes}\xspace}
\newcommand\qc{\var{qc}\xspace}
\newcommand\sig{\var{sig}\xspace}

\newcommand\parent{\var{parent}\xspace}
\newcommand\cmd{\var{cmd}\xspace}
\newcommand\height{\var{height}\xspace}
\newcommand\votes{\var{votes}\xspace}

\newcommand\createLeaf{\op{createLeaf}\xspace}

\newcommand\isQuorum{\op{isQuorum}\xspace}

\newcommand\update{\op{update}\xspace}
\newcommand\updateQCHigh{\op{updateQCHigh}\xspace}
\newcommand\onCommit{\op{onCommit}\xspace}
\newcommand\onReceiveProposal{\op{onReceiveProposal}\xspace}
\newcommand\onReceiveVote{\op{onReceiveVote}\xspace}
\newcommand\onPropose{\op{onPropose}\xspace}
\newcommand\getLeader{\op{getLeader}\xspace}
\newcommand\onBeat{\op{onBeat}\xspace}
\newcommand\onNextSyncView{\op{onNextSyncView}\xspace}
\newcommand\onReceiveNewView{\op{onReceiveNewView}\xspace}
\newcommand\execute{\op{execute}\xspace}
\newcommand\QC{\op{QC}\xspace}

\newcommand{\onecom}{2L1C}
\newcommand{\on}{MSP-HotStuff}
\newcommand{\off}{Counting-HotStuff}
\newcommand{\mix}{MSP-Replicas}
\newcommand{\form}{MBF-HotStuff}

\newcommand{\rvector}{\boldsymbol{r}}

\newcommand{\lvector}{\boldsymbol{\lambda}}
\newcommand{\evector}{\boldsymbol{e}}
\newcommand{\yvector}{\boldsymbol{y}}

\newcommand{\xvector}{\boldsymbol{x}}

\newcommand{\zerovector}{\boldsymbol{0}}

\def \ifempty#1{\def\temp{#1} \ifx\temp\empty }

\graphicspath{ {./images/} }
\newcommand{\dotparagraph}[1]{\paragraph*{#1.}}

\pagestyle{plain}
\begin{document}

\title{\bf Consensus Beyond Thresholds:\\ Generalized Byzantine Quorums Made Live}

\author{Orestis Alpos$^1$\\
  University of Bern\\
  \url{orestis.alpos@inf.unibe.ch}
\and Christian Cachin$^1$\\
  University of Bern\\
  \url{cachin@inf.unibe.ch}
}

\date{}

\maketitle

\footnotetext[1]{Institute of Computer Science, University of Bern,
  Neubr\"{u}ckstrasse 10, 3012 CH-Bern, Switzerland.}

\begin{abstract}\noindent
  Existing Byzantine fault-tolerant (BFT) consensus protocols address only
  \emph{threshold failures}, where the participating nodes fail
  independently of each other, each one fails equally likely, and the
  protocol's guarantees follow from a simple bound on the \emph{number} of
  faulty nodes.  With the widespread deployment of Byzantine consensus in
  blockchains and distributed ledgers today, however, more sophisticated
  trust assumptions are needed.

  This paper presents the first implementation of BFT consensus with
  generalized quorums.  It starts from a number of generalized trust
  structures motivated by practice and explores methods to specify and
  implement them efficiently.  In particular, it expresses the trust
  assumption by a monotone Boolean formula (MBF) with threshold operators
  and by a monotone span program (MSP), a linear-algebraic model for computation.
  An implementation of HotStuff BFT
  consensus using these quorum systems is described as well and compared to
  the existing threshold model. Benchmarks with HotStuff running on up to 
  40 replicas demonstrate that the MBF specification incurs no significant
  slowdown, whereas the MSP expression affects latency and throughput
  noticeably due to the involved computations.
\end{abstract}

\section{Introduction}
Trust assumptions are a fundamental part of secure distributed computing protocols. On one hand, they capture the limits of a protocol's safety properties, thus characterizing the domains in which it may be deployed safely.
But on the other hand, they also impose limits on the potential of the protocol and, in some sense, the  expressiveness and freedom of the parties, thus restricting the domains in which the protocol will be deployed.
\emph{Byzantine quorum systems}~\cite{DBLP:journals/dc/MalkhiR98} are the key abstraction for capturing the trust assumptions in distributed protocols where parties may behave maliciously. A Byzantine quorum system (BQS) is defined as a set of \emph{quorums}, where a quorum is a set of parties that is sufficient to execute a particular task. A BQS is closely related with a \emph{fail-prone system}, which contains the sets of parties that are tolerated to fail in an execution, through the following intersection property: any two quorums must intersect in a set of parties that is not expected to fail. Thus, BQS formalize the expected Byzantine failures and allow reasoning about the resilience of protocols using them.

We refer to a BQS that is allowed to contain arbitrary quorums as a \emph{generalized BQS}, in contrast to a \emph{threshold BQS} that defines quorums only by their cardinality.
Generalized BQS have been intensely explored in the literature~\cite{DBLP:journals/dc/MalkhiR98}. For example, Malkhi \etal~\cite{DBLP:journals/siamcomp/MalkhiRW00} study their load and availability, Hirt and Maurer~\cite{DBLP:journals/joc/HirtM00} use a very related notion for secure multiparty computation, Junqueira \etal~\cite{DBLP:journals/dc/JunqueiraMHP10} explore an equivalent formalization in terms of survivor sets, and Warns \etal~\cite{DBLP:conf/srds/WarnsFH06} introduce a generalized model that unifies multiple such failure models.

Nevertheless, these works approach generalized BQS mainly from a theoretical perspective. 
When considering practical, state-of-the-art distributed protocols with Byzantine faults, especially state-machine replication (SMR) protocols in the blockchain space, 
one notices that threshold BQS are the only occurring trust structure. To name some examples,
Aublin \etal~\cite{DBLP:journals/tocs/AublinGKQV15} present an abstraction of an SMR protocol and build BFT algorithms as instances of it.
Liu \etal~\cite{DBLP:conf/opodis/CachinSV16} introduce cross fault-tolerance (XFT), a model that provides guarantees of crash fault-tolerance but tolerates a number of Byzantine faults.
Buchman \etal~\cite{DBLP:journals/corr/abs-1807-04938} present Tendermint, 
a consensus protocol based on the classical PBFT~\cite{DBLP:journals/tocs/CastroL02} algorithm, making use of a novel gossip primitive.
Finally, Yin \etal~introduce HotStuff~\cite{DBLP:conf/podc/YinMRGA19}, a BFT SMR protocol with linear communication complexity.
Threshold BQS have been researched and well understood in practice, but consensus using \emph{generalized BQS} has been unexplored.

\paragraph{Threshold is not enough.}

However, faults and attacks on the nodes in a system often occur in a coordinated way and exhibit substantial dependencies in practice. Using Werner Vogels' words~\cite{Vogels}: ``Many academics will confess to have made the assumption that failures of component are not correlated. This absolutely unrealistic assumption will come back to haunt you in real life, where failures frequently are correlated.''

In this sense, Cachin~\cite{DBLP:conf/dsn/Cachin01} gives an example of a distributed system where the parties are differentiated by location and operating system (OS).
In this scenario, a vulnerability in an OS may result in all parties with that OS being exploited or a hostile action could corrupt all parties in a specific location. 
This example of a generalized BQS explicitly considers correlations and tolerates more failures than possible in the threshold model. 
It highlights the strictly richer trust assumptions we can specify and the resilience we can gain with protocols based on generalized BQS.

As another motivating example from the field of multiparty computation, Gennaro~\cite{DBLP:phd/ndltd/Gennaro96} studies verifiable secret sharing over non-threshold access structures, but using only formulas in disjunctive normal form to build the access structure. As a future direction he calls for a compact representation of an access structure, which would allow any BQS to be expressed, thus leading to more efficient and flexible MPC protocols.
Moreover, Cramer \etal~\cite{DBLP:conf/eurocrypt/CramerDM00} propose MPC protocols over any general trust structure, assuming that the trust structure allows for MPC at all. They work with generalized linear secret sharing scheme, which is analogous to our focus on monotone span programs.

An important tool in encoding generalized BQS are \emph{monotone span programs}~\cite{DBLP:conf/coco/KarchmerW93, DBLP:conf/eurocrypt/Brickell89}. A monotone span program (MSP) is a linear algebraic model of computation, closely related to other models in the theory of computation, such as Boolean formulas and branching programs~\cite{DBLP:conf/coco/KarchmerW93}. 
It is known that monotone span programs are more powerful than monotone circuits. 
Babai \etal~\cite{DBLP:journals/combinatorica/BabaiGW99} prove that there exist functions requiring an exponential-size formula that can be encoded by a linear-size MSP. 
Monotone span programs have also been proved equivalent to linear secret-sharing schemes~\cite{DBLP:conf/coco/KarchmerW93, Beimel96}
and have been used to generalize existing threshold schemes in the fields of secret sharing and multiparty computation (MPC).
Cramer \etal~\cite{DBLP:conf/eurocrypt/CramerDM00} provide constructions for general MPC protocols based on the MSP primitive.

Recent work on consensus protocols has started to consider trust models that generalize the traditional threshold assumption.  Flexible Byzantine fault tolerance~\cite{DBLP:conf/ccs/MalkhiN019}, for instance, considers diverse quorums where some nodes may choose a different threshold quorum.  Asymmetric quorum systems~\cite{DBLP:conf/opodis/CachinT19} let each node specify its own quorum system.

\paragraph{Contributions.}

In this work we focus on generalized BQS and demonstrate the first BFT consensus protocol with generalized quorums. We describe all components necessary for generalized BQS-based protocols and investigate different ways to realize them. In particular, we address all the following topics:

\begin{description}
\item[Encoding a generalized BQS.] We first consider a monotone Boolean formula (MBF)
  consisting of \emph{and}, \emph{or}, and \emph{threshold} operators for
  specifying a BQS.  Since monotone span programs are stronger than
  monotone Boolean formulas, as mentioned, we also investigate MSP for
  representing generalized BQS.  We exhibit an algorithm for turning a BQS
  specification into an MSP. When the BQS is specified as a monotone
  formula, the size of the created MSP is linear in its inputs.

\item[Integrating generalized BQS with consensus.]  For both
  representations (MBF and MSP), we show algorithms for checking quorum
  properties and for integrating them with distributed protocols.
  Comparing the implementations we observe that the MBF-based method
  generally performs better than the MSP-based implementation because of
  the matrix manipulations required by the MSP.  This provides the first
  unified treatment of the efficiencies of these methods and paves the way
  for their practical deployment.

\item[Generalized Byzantine quorum systems.] We apply our methods to
  generalized BQS as described in the literature.  For the M-Grid
  BQS~\cite{DBLP:journals/siamcomp/MalkhiRW00}, which arranges $n$ nodes in
  a square and tolerates $O(\sqrt{n})$ Byzantine nodes, we construct the
  corresponding MSP and investigate its properties. We implement an
  attribute-defined BQS generalizing the OS and location-based example
  mentioned before and represent this as an MBF and as an MSP.

\item[HotStuff consensus with generalized BQS.]  Last but not least, we
  address consensus, the central problem in distributed computing.
  Applying our approach, we realize consensus with generalized BQS by
  building on HotStuff~\cite{DBLP:conf/podc/YinMRGA19}, an efficient BFT
  consensus algorithm.  This is the first BFT consensus implementation
  using a generalized trust assumption.  In benchmarks with up to 40
  replicas, we observe that the performance with the MBF representation is
  comparable to that of the threshold BQS.  Using the same threshold trust
  structure, the MSP representation shows lower performance.
\end{description}

\paragraph{Related work.}
The exploration of generalized structures has a long background in the field of secret sharing.
Benaloh and Leichter~\cite{DBLP:conf/crypto/Leichter88} present the first secret-sharing scheme for arbitrary monotone access structures. They use monotone Boolean formulas with \emph{and}, \emph{or}, and \emph{threshold} operators to express the access structure and introduce a recursive secret-sharing construction. Their scheme is efficient for access structures that can be expressed with polynomially sized formulas.
The MSP model was first used by Brickell~\cite{DBLP:conf/eurocrypt/Brickell89} for secret sharing, although not explicitly identified as such.
After Karchmer and Wigderson~\cite{DBLP:conf/coco/KarchmerW93} formally defined MSP as a model for computation, it has been shown that linear secret-sharing schemes are equivalent to MSP~\cite{Beimel96}.

Many constructions have been suggested for creating the MSP of a given access structure. Lewko and Waters~\cite{DBLP:conf/eurocrypt/LewkoW11a}, in a way similar to Benaloh and Leichter~\cite{DBLP:conf/crypto/Leichter88}, suggested a general algorithm for converting any monotone Boolean formula to an MSP, that is however inefficient for access structures expressed with threshold operators.
The notion of  \emph{insertion} was introduced in by Martin~\cite{Martin93}. Nikov and Nikova~\cite{DBLP:journals/iacr/NikovN04} explored constructions for recursively building the MSP for an access structure from existing MSPs for smaller access structures and presented the definition of insertion used here.

\paragraph{Organization.}
The rest of the paper is organized as follows. Section \ref{sec:basics}
introduces the main concepts and important background. Section \ref{sec:techniques}
presents our techniques for encoding a BQS. In Section \ref{sec:consensus}
we describe HotStuff consensus algorithm with generalized BQS and prove
its consistency and liveness properties.  Section \ref{sec:eval} subsequently
evaluates an implementation of our generalized BQS methods using the HotStuff consensus protocol.

\section{Preliminaries}\label{sec:basics}
\dotparagraph{Parties and failures}
We denote as $\CP = \{p_1, \ldots, p_n \}$ the set of all \emph{parties} in a distributed protocol. 
Whenever describing properties of protocols, we consider Byzantine faults, meaning that faulty parties are allowed to take arbitrary steps, cooperate, and learn the internal state held by any of them. For a specific execution we denote as $B$ the set of the \emph{actually faulty parties}.

\begin{definition}[Fail-prone system~\cite{DBLP:journals/joc/HirtM00}]\label{def:fs}
A \emph{fail-prone system} $\CF \subseteq 2^\CP$ is a set of subsets of $\CP$ such that for every execution there is one \emph{fail-prone set} $F \in\CF$ with $B \subseteq F$. 
A fail-prone system is maximal, in the sense that no fail-prone set contains another one. 
\end{definition}

\begin{definition}[Byzantine quorum system~\cite{DBLP:journals/dc/MalkhiR98}]\label{def:bqs}
Let  $\CF$ be a fail-prone system.
A \emph{Byzantine quorum system (BQS)} $\CQ \subseteq 2^\CP$ is a non-empty set of non-empty subsets of $\CP$,
such that no set is contained in another one,
where each $Q \in \CQ$ is called a \emph{quorum},
satisfying the following properties:
\begin{description}
  \item[Consistency:] \[\forall Q_1, Q_2 \in \CQ , \forall F \in \CF: \, Q_1 \cap Q_2 \not
    \subseteq F.\]
  \item[Availability:] \[\forall F \in \CF: \, \exists Q \in \CQ: \, F \cap Q = \emptyset.\]
  \end{description}
\end{definition}

The definition actually corresponds to a \emph{Byzantine dissemination quorum system}~\cite{DBLP:journals/dc/MalkhiR98}.
When a BQS is defined only by cardinality, i.e., it includes all the subsets of \CP of a given size, it is called a \emph{threshold BQS}. When a BQS is allowed to contain arbitrary subsets of \CP it is called a \emph{generalized BQS}. 

\begin{definition}[$Q^3$-condition~\cite{DBLP:journals/dc/MalkhiR98, DBLP:journals/joc/HirtM00}]\label{def:q3}
  Let \CF be a fail-prone system. We say that \CF satisfies the \emph{$Q^3$-condition} whenever
  \[
    \forall F_1, F_2, F_3 \in \CF: \, \CP \not\subseteq F_1 \cup F_2 \cup F_3.
  \]
\end{definition}
For threshold BQS, the $Q^3$-condition is equivalent to the requirement $n > 3 f$.
Given a fail-prone system \CF, a BQS for \CF exists if and only if \CF satisfies the $Q^3$-condition.
In particular, if $Q^3$ holds, then the bijective complement of the fail-prone sets,
i.e., $\overline{\CF} = \{ \CP \setminus F ~|~ F \in \CF \}$,
is a BQS, called the \emph{canonical BQS} of $\CF$.

\begin{definition}[Access structure~\cite{DBLP:conf/crypto/Leichter88}]
  \label{def:as}
A \emph{monotone access structure} \CA is a collection of non-empty subsets of $\CP$ such that $A \subseteq \CP$ is called \emph{authorized} whenever $A \in \CA$. Monotonicity means that if $A \in \CA$, then any superset $A'$ of $A$ is also in \CA.  The \emph{basis} of \CA is the minimal collection of authorized sets.
\end{definition}

A BQS specifies the quorums that are self-sufficient for a particular task.
The term \emph{access structure} is used more often
in multiparty computation and secret
sharing~\cite{DBLP:conf/crypto/Leichter88, DBLP:journals/joc/HirtM00, DBLP:conf/eurocrypt/CramerDM00}, whereas quorum systems originate in distributed computing~\cite{DBLP:journals/dc/MalkhiR98}. 
Notice that the basis of an access structure is equivalent to our notion of a
(minimal) quorum system in the sense thta every quorum is a (minimal)
authorized set.  In the following, we will use BQS and access
structure interchangeably.  We thus depart slightly from the literature
(and from Definition~\ref{def:as}) by redefining the access structure to its basis and therefore
interpret both as the minimal collection of
subsets of~\CP with a certain property.

The notion of an \emph{insertion}, which we present next, has been introduced as an effort to create
authorized sets  and access structures by combining existing, smaller ones.

\begin{definition}[Insertion on access structures~\cite{Martin93, DBLP:journals/iacr/NikovN04}]\label{def:insertion}
Let $\CA_1$ and $\CA_2$ be two monotone access structures defined on two sets of parties $\CP_1$ and $\CP_2$, respectively, and let $p_z \in \CP_1$ such that $p_z \not\in \CP_2$. The \emph{insertion} of $\CA_2$ at $p_z$, written as $\CA_1(p_z \rightarrow \CA_2)$, is the monotone access structure $\CA_3$ defined on the set $ \CP_3 = (\CP_1 \setminus \{p_z\}) \cup \CP_2$ that satisfies the following: a set $A \subseteq \CP_3$ is authorized in $\CA_3$ if and only if the set $A \cap \CP_1$ is authorized in $\CA_1$ or the set  $A \cap \CP_1$ together with $p_z$ is authorized in $\CA_1$ and $p_z$ is replaced in $A$ by a set authorized in $\CA_2$. Formally,
\begin{equation*}
A \in \CA_3 \ \Leftrightarrow \ \biggl[
\begin{array}{l}
  A \cap \CP_1 \in \CA_1 \: \lor\\
  \left( (A \cap \CP_1)\cup \{p_z\} \in \CA_1 \land  A \cap \CP_2 \in \CA_2 \right)
\end{array}
\biggr].
\end{equation*} 
\end{definition}

\dotparagraph{Monotone span programs~\cite{DBLP:conf/coco/KarchmerW93}}\label{def:msp}
\emph{Monotone span programs} (MSP) have been introduced as a linear-algebraic model of computation.  An MSP is a quadruple $(M, \rho, \evector_1, \CP)$, where $M$ is an $m \times d$ matrix over a finite field \CK, $\rho$ is a surjective function $\{1, \dots, m\} \rightarrow \{p_1, \ldots, p_n \}$ that labels each row of $M$ with a party in \CP, and $\evector_1$ is the vector $(1, 0, \ldots, 0) \in \CK^d$, called the \emph{target vector}.
If $\rvector_i$ is a row of $M$ and $\rho(i) = p_j, p_j \in \CP$, we say that party $p_j$ \emph{owns} row $\rvector_i$. 
There is also a function $\phi: \CP \rightarrow 2^{\{1, \dots, m \}}$, such that $\phi(p_j)$ is the set of rows owned by party $p_j$.
The \emph{size} of the MSP is the number of its rows $m$.

For any set $A \subseteq \CP$ we define $M_A$ to be the $m_A \times d$ matrix 
obtained from $M$ by keeping only the rows $\rvector_i$ with $\rho(i) \in A$,
that is, only the rows owned by parties in $A$.
Let $M_A^\intercal$ denote the transpose of $M_A$ and $\op{Im}(M_A^\intercal)$ the span of the rows of $M_A$. We say that the MSP \emph{accepts} the set $A$ if the rows of $M_A$ span $\evector_1$, that is, $\evector _1\in \op{Im}(M_A^\intercal)$. Equivalently, there is a \emph{recombination vector} $\lvector_A$ such that $\lvector_A M_A = \evector_1$. We say that the MSP \emph{rejects} $A$ otherwise. It follows that each MSP accepts exactly one monotone access structure and each monotone access structure can be expressed in terms of an MSP~\cite{DBLP:conf/coco/KarchmerW93, Beimel96}.

One of the objectives of this work is to construct an MSP that encodes a given BQS, i.e., accepts exactly its quorums. Thus, when working with MSPs (in Section \ref{sec:encode_msp}), we start from a given BQS (and an implicit fail-prone system), such that \emph{consistency} and \emph{availability} of the BQS are satisfied. We usually express this in terms of the access structure equivalent to the BQS.

\section{Techniques}\label{sec:techniques}
When Byzantine quorum systems are allowed to contain arbitrary sets, two questions arise: How will these sets be specified by the user? And how are they encoded within a protocol? A first solution could involve 
an enumeration of all quorums, this would however lead to long user-inputs and large internal representation.
A more efficient solution is hence required, one that provides users with an effective, intuitive and user-friendly way to specify a BQS.
It is also crucial to internally encode the BQS using a data structure that is efficient, able to encode any possible BQS, and also offering an inexpensive method for checking whether a set is a quorum. 

\subsection{Generalized Byzantine quorum systems as formulas}
In this section we show how the generalized trust assumptions of the system can be specified by the user in a structured way and encoded within the protocol as a Boolean formula.

We observe that it is enough to use only the \emph{threshold operator} $\Theta_k^m(q_1, \ldots, q_m)$,
which specifies that any subset of $\lbrace q_1, \ldots, q_m \rbrace$ with cardinality $k$ is a quorum.
Each $q_i$ can be a literal, i.e., a party identifier, or a nested threshold operator.
The threshold operator is the generalization of logical \emph{conjunction},
that would require all $q_i$s to make a quorum, 
and logical \emph{disjunction}, that would allow each of them alone to be a quorum 
-- the first can be obtained for $k = m$ and the second for $k = 1$.
The threshold operator is thus complete, in the sense that it can describe any possible BQS.
Therefore, the users are allowed to specify the generalized trust assumptions in a standard format like JSON,
using nested threshold operators.
This is aligned with the way users specify their quorum slices in Stellar Blockchain~\cite{Mazieres16} with threshold operators.

We use the notion of a \emph{monotone Boolean formula} (MBF), a formula that consists of \emph{and}, \emph{or}, and \emph{threshold} operators and literals that correspond to parties.
An MBF $F$ describes a monotone function $2^\CP \rightarrow \{0, 1\}$ in the following way;
when $F$ consists only of a literal, then the value of $F$ on input $S \subseteq \CP$ is 1 if and only if $F \in S$;
when $F$ is the threshold operator $\Theta_k^m(q_1, \ldots, q_m)$, then $F(S)$ is 1 if at least $k$ of the  $q_1, \ldots, q_m$ are recursively evaluated to 1 on input $S$;
and accordingly for the other operators.
We say that an MBF $F$ \emph{implements} a BQS \CQ if it returns 1 on input a set $A \supseteq Q$, for $Q \in \CQ$, and 0 otherwise. 

We use a tree data structure to store a BQS described through an MBF, where the internal nodes represent an operator, their children are the operands, and the leaves always represent a party.
Clearly, the size of the tree (defined as the number of nodes) is linear in the quorum specification given by the user.
We employ Algorithm \ref{alg:isAuthorized_formula} to evaluate whether a set is considered a quorum in the BQS implemented by a formula $F$. The runtime is linear in the size of $F$, given that the set membership operation returns in constant time.

\begin{algo*}
\vbox{
\small
\begin{numbertabbing}\reset
xxxx\=xxxx\=xxxx\=xxxx\=xxxx\=xxxx\=MMMMMMMMMMMMMMMMMMM\=\kill
    \textbf{eval}$(F, A)$ \label{}\\ 
	\> \textbf{if} \( F \) is a literal \textbf{ then } \label{}\\
	\>\> \textbf{return} \( (F \in A) \) \label{}\\
	\> \textbf{else}  \label{}\\
	\>\> write $F = \op{op}(F_1, \ldots, F_m)$, where
             $\op{op} \in \{\land, \lor, \Theta \}$ \label{}\\	
    \>\> \textbf{for each} \( F_i \) \textbf{do} \label{}\\
	\>\>\> \(x_i \gets \textbf{eval}(F_i, A) \) \label{}\\
	\>\> \textbf{return} \( \op{op}(x_1, \ldots, x_m) \) \label{}
 	\end{numbertabbing}
}
\caption{Checking whether set $A$ is a quorum in the BQS implemented by formula $F$.}
\label{alg:isAuthorized_formula}
\end{algo*}

\dotparagraph{A layered BQS}\label{sec:one_com}
An example that highlights a more complex BQS that cannot be specified in the threshold model is a \emph{2-layered-1-common} BQS (\onecom). This example shows a hierarchical trust structure with a notion of proximity that models a realistic system structured into two levels. To our knowledge, it has not been used in practice so far.
Let us consider two disjoint sets of parties, organized in two layers, with $k$ parties $A_0 \ldots A_{k-1}$ on the first 
and $3k$ parties $B_0 \ldots B_{3k-1}$ on the second.
We may assume that the parties in the first layer are more trusted than
those in the second layer. A quorum consists of a strict $2/3$ majority of the parties in the first layer 
plus, for each party $A_\ell$ of these, a 2 out-of 4 threshold from the set $\{B_{3\ell}, \ldots, B_{3(\ell + 1)} \} $, where indices are modulo $3k$.
For $k \in \BN$, the general formula of the BQS is 
\begin{equation}\label{eq:1common}
\Theta_{\ceil{\frac{2k + 1}{3}}}^k \left(\lbrace  A_\ell 
\land \Theta_2^4 \left(\lbrace B_{m}  \rbrace\right)  \rbrace\right), 
\text{ for } \ell \in \{0, \ldots, k-1\} \text{ and } m \in \{3\ell, \ldots, 3(\ell+1) \op{ mod } 3k\}.
\end{equation}

A 2-layered-1-common BQS for $k = 4$ can be seen in Figure \ref{fig:1common}. In Figure \ref{fig:configjson} we show a configuration file that specifies this BQS~-- it is actually the file used during the evaluation. It is worth to notice that this BQS, even for $k = 4$, results in a system with 792 quorums, which highlights why a naive, quorum-enumeration solution would be impractical.
Notice that by using the fail-prone system that corresponds to a \onecom~BQS in the canonical way, we observe that this BQS satisfies the $Q^3$-condition because every fail-prone set contains fewer than $k / 3$ parties from the first layer. 
Thus, it is indeed a BQS.

\begin{figure}
\begin{center}
	\includegraphics[width=\textwidth]{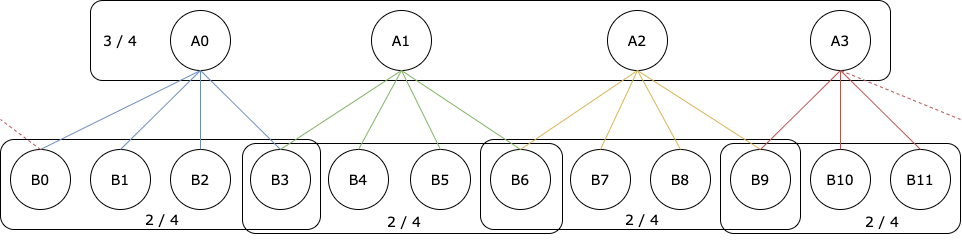}
\end{center}
\caption{The \onecom~Byzantine quorum system for $k = 4$.
The corresponding MBF is
$\Theta_3^4(A_0 \land \Theta_2^4(B_0, B_1, B_2, B_3), A_1 \land \Theta_2^4(B_3, B_4, B_5, B_6), A_2 \land \Theta_2^4( B_6, B_7, B_8, B_{9}), A_3 \land \Theta_2^4( B_{9}, B_{10}, B_{11}, B_0)  ) .
$}\label{fig:1common}
\end{figure}

\begin{figure}
\begin{center} 	
	\includegraphics[width=\textwidth]{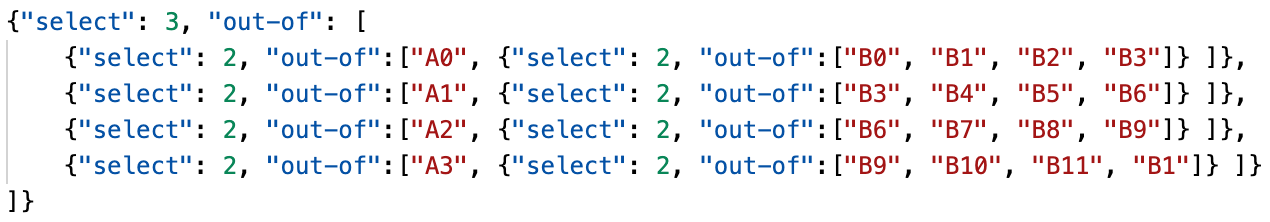}
\end{center}
\vspace*{-3ex}\caption{A specification of the \onecom~Byzantine quorum system for $k = 4$ in JSON format using nested threshold operators.}\label{fig:configjson}
\end{figure}

\subsection{Generalized Byzantine quorum systems as monotone span programs}\label{sec:encode_msp}
Until now, we considered BQS that can be efficiently encoded using formulas.
However, as already discussed, results in complexity theory suggest that MSPs can be superpolynomially stronger than monotone formulas.
Moreover, the MSP is a compact and concise data structure, that can be encoded by a matrix and a vector over a field.
For these reasons, we also investigate the capabilities of the MSP as the data structure that encodes a BQS. 
In this section we show how to instantiate an MSP from an MBF and how the MSP can be used to check for quorums.
Later, we evaluate the MSP-based implementation and compare it with the one based on MBF.
We remark, however, that constructing the MSP from an MBF is not the only option; in case a BQS is more efficiently described by an MSP than by a formula,
we could plug the MSP directly in the protocol and use the same quorum-checking algorithms.
Throughout this section, we formulate all our results in terms of the access structure implied by the given BQS, 
since we only focus on the quorums of the BQS and not its other properties.

In line with our previous terminology, we say that an MSP \CM \emph{implements} an access structure \CA if it accepts exactly the sets in \CA and their supersets.
Returning to the idea of \emph{insertion}, we first show how this notion is reflected on the MSP that implements an access structure.
In the following, let $\CM^{(k)} = (M^{(k)}, \rho^{(k)}, \evector_1^{(k)}, \CP^{(k)} )$ be MSPs, where $M^{(k)}$ has dimensions $m_k \times d_k$, for $k \in \{1, 2, 3\}$.
We denote the rows of each $M^{(k)}$ as $\rvector_i^{(k)}$, for $1 \leq i \leq m_k$.
We also denote the $j^{\text{th}}$ column in a row $\rvector$ as $\rvector[j]$, a range of columns $j_1$ to $j_2$ as $\rvector[j_1 : j_2]$, a row with $\ell$ zero elements as $\zerovector^\ell$, and the concatenation of two rows $\rvector$ and $\rvector^\prime$, that is a new vector of size $|\rvector| + |\rvector^\prime|$, whose first elements are $\rvector$ and the last are $\rvector^\prime$, as $\rvector \mid\mid \rvector^\prime$.

\begin{definition}[Insertion on MSPs~\cite{DBLP:journals/iacr/NikovN04}]\label{def:insertion_msp}
Let $\rvector_z$ be a row of $M^{(1)}$ owned by $p_z \in \CP^{(1)}$ -- assuming without loss of generality it is unique.
The \emph{insertion} of $M^{(2)}$ in row $\rvector_z$ of $M^{(1)}$, written as $\CM^{(1)}(\rvector_z \rightarrow \CM^{(2)})$, is an MSP $\CM^{(3)}$, where $M^{(3)}$ has rows identical to $M^{(1)}$, except for $\rvector_z$, which is repeated $m_2$ times in $M^{(3)}$, each time multiplied by the first column of $M^{(2)}$ and with the rest of the columns 2 to $d_2$ of $M^{(2)}$ appended in the end. The function $\rho^{(3)}$ labels the rows of $M^{(3)}$ with the same owners as $\rho^{(1)}$, except for $\rvector_z$. The newly inserted rows are labeled according to $\rho^{(2)}$.

More formally, $M^{(3)}$ is an $(m_1 + m_2 - 1) \times (d_1 + d_2 - 1)$ matrix with rows
\begin{equation}\label{eq:M3_M}
\rvector_i^{(3)} = 
\begin{cases} 
      \rvector_i^{(1)} \mid\mid \zerovector^{d_2 - 1} 									& 1 \leq i \leq z - 1\\
      \rvector_z * \rvector_{i-z+1}^{(2)}[1] \mid\mid \rvector_{i-z+1}^{(2)}[2 : d_2] 	& z \leq i \leq z + m_2 - 1 \\
      \rvector_{i-m_2+1}^{(1)}  \mid\mid \zerovector^{d_2 - 1}	  				 	& z + m_2 \leq i \leq m_1 + m_2 - 1
   \end{cases}
\end{equation}
and $\rho^{(3)}$ is a surjective function $\{1, \dots, m_1 + m_2 - 1\} \rightarrow (\CP^{(1)} \setminus \{p_z\}) \cup \CP^{(2)}$ defined as
 \begin{equation*}
\rho^{(3)}(i) = 
\begin{cases} 
      \rho^{(1)}(i)						& 1 \leq i \leq z - 1\\
      \rho^{(2)}(i - z + 1)		 		& z \leq i \leq z + m_2 - 1 \\
      \rho^{(1)}(i - m_2 + 1 )	  	& z + m_2 \leq i \leq m_1 + m_2 - 1
   \end{cases}
\end{equation*}
\end{definition}

\begin{lemma}~\cite{DBLP:journals/iacr/NikovN04}\label{lem:insertion}
If an MSP $\CM^{(1)}$ implements the access structure $\CA^{(1)}$, with row $\rvector_z$  owned by party $p_z$, and an MSP $\CM^{(2)}$ implements the access structure $\CA^{(2)}$, then the MSP $\CM^{(1)}(\rvector_z \rightarrow \CM^{(2)})$ implements the access structure $\CA^{(1)}(p_z \rightarrow \CA^{(2)})$.
\end{lemma}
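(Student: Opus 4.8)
The plan is to verify the biconditional in Definition~\ref{def:insertion} directly, by relating membership of a set $A \subseteq \CP^{(3)}$ in $\CA^{(3)} := \CA^{(1)}(p_z \rightarrow \CA^{(2)})$ to the existence of a recombination vector for $M^{(3)}_A$. Write $A_1 = A \cap \CP^{(1)}$ (more precisely, the part of $A$ owned by parties of $\CP^{(1)} \setminus \{p_z\}$, together with $p_z$ treated separately) and $A_2 = A \cap \CP^{(2)}$. The key structural observation about $M^{(3)}$ in \eqref{eq:M3_M} is that it has a block shape: the first $d_1$ columns restricted to the rows \emph{not} coming from the insertion are exactly the corresponding rows of $M^{(1)}$ (padded with zeros), the $m_2$ inserted rows have their first $d_1$ columns equal to scalar multiples $\rvector_{i-z+1}^{(2)}[1]\cdot\rvector_z$ of the single row $\rvector_z$, and the last $d_2-1$ columns are zero on the non-inserted rows and equal to $\rvector_{i-z+1}^{(2)}[2:d_2]$ on the inserted rows. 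So any candidate recombination vector $\lvector$ for $M^{(3)}_A$ splits into weights $\lvector^{(1)}$ on the non-inserted rows and weights $\lvector^{(2)}$ on the inserted rows.

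First I would prove the ``$\Leftarrow$'' direction. In case $A_1 \cup \{?\} \setminus \{p_z\} \in \CA^{(1)}$ without using $p_z$ — i.e. $A \cap \CP^{(1)} \in \CA^{(1)}$ and $p_z \notin$ the witnessing minimal set — take the recombination vector $\lvector^{(1)}$ that $\CM^{(1)}$ provides, put zero weight on $\rvector_z$ (hence on all inserted rows), and observe that the first $d_1$ coordinates of $\lvector^{(1)} M^{(3)}_A$ reproduce $\lvector^{(1)}M^{(1)}_{A_1} = \evector_1^{(1)}$ while the trailing $d_2-1$ coordinates vanish; this is $\evector_1^{(3)}$. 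In the other case, $(A\cap\CP^{(1)})\cup\{p_z\}\in\CA^{(1)}$ with recombination vector $\lvector^{(1)}$ assigning some weight $c = \lvector^{(1)}_z$ to $\rvector_z$, and $A_2 \in \CA^{(2)}$ with recombination vector $\lvector^{(2)}$, so $\lvector^{(2)}M^{(2)}_{A_2} = \evector_1^{(2)}$. I would build $\lvector^{(3)}$ by keeping $\lvector^{(1)}$'s weights on the non-inserted rows and assigning weight $c\cdot\lvector^{(2)}_{j}$ to the inserted row that carries $\rvector_{j}^{(2)}$. Then the trailing $d_2-1$ coordinates of $\lvector^{(3)}M^{(3)}_A$ equal $c\cdot\lvector^{(2)}M^{(2)}_{A_2}[2:d_2] = c\cdot\evector_1^{(2)}[2:d_2] = \zerovector$ (since $\evector_1^{(2)} = (1,0,\dots,0)$), and the first $d_1$ coordinates equal the $\CM^{(1)}$-combination where the contribution of $\rvector_z$ is restored with total coefficient $c\cdot\lvector^{(2)}M^{(2)}_{A_2}[1] = c\cdot 1 = c$, matching $\lvector^{(1)}$; hence the first block is $\evector_1^{(1)}$, giving $\evector_1^{(3)}$ overall.

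For the ``$\Rightarrow$'' direction I would start from an arbitrary recombination vector $\lvector^{(3)} = (\lvector^{(1)} \mid\mid \lvector^{(2)})$ for $M^{(3)}_A$ (split by non-inserted vs. inserted rows, recalling that $A_2$ must be nonempty for any inserted rows to be present) and read off two facts from $\lvector^{(3)}M^{(3)}_A = \evector_1^{(3)}$: looking at columns $d_1+1,\dots,d_1+d_2-1$ gives $\lvector^{(2)}\,M^{(2)}_{A_2}[\,\cdot\,,2:d_2] = \zerovector$, and looking at columns $1,\dots,d_1$ gives $\lvector^{(1)}M^{(1)}_{A_1\setminus\{p_z\}} + \bigl(\sum_j \lvector^{(2)}_j \rvector^{(2)}_j[1]\bigr)\rvector_z = \evector_1^{(1)}$, i.e. setting $c := \lvector^{(2)}M^{(2)}_{A_2}[1] = \sum_j \lvector^{(2)}_j\rvector^{(2)}_j[1]$, the vector $\lvector^{(1)}$ augmented with coefficient $c$ on $\rvector_z$ is a recombination vector for $M^{(1)}$ over the rows of $(A\cap\CP^{(1)})\cup\{p_z\}$. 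If $c = 0$ then $A\cap\CP^{(1)}\in\CA^{(1)}$ already; if $c \neq 0$ then $(A\cap\CP^{(1)})\cup\{p_z\}\in\CA^{(1)}$ and moreover $c^{-1}\lvector^{(2)}$ satisfies $c^{-1}\lvector^{(2)}M^{(2)}_{A_2} = (1, 0,\dots,0) = \evector_1^{(2)}$ (first coordinate $c^{-1}c = 1$, the rest zero by the earlier block), so $A_2 \in \CA^{(2)}$. Either way the right-hand side of Definition~\ref{def:insertion} holds.

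The main obstacle I anticipate is purely bookkeeping rather than conceptual: carefully tracking the three row-ranges and two column-blocks of \eqref{eq:M3_M}, and being precise about the role of $p_z$ — in particular that the statement ``$A\cap\CP^{(1)}\in\CA^{(1)}$'' is monotone and so is insensitive to whether $p_z\in A$, which is why the $c=0$ and $c\neq 0$ cases cleanly match the disjunction. One should also double-check the degenerate cases ($A_2 = \emptyset$, so no inserted rows contribute, reducing to the $\CM^{(1)}$ analysis), and confirm that $\CM^{(3)}$ as defined is a genuine MSP (the first column of the target vector is preserved and $\rho^{(3)}$ is surjective onto $(\CP^{(1)}\setminus\{p_z\})\cup\CP^{(2)}$), though Lemma~\ref{lem:insertion} as stated lets us cite Definition~\ref{def:insertion_msp} for that.
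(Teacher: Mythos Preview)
The paper does not actually prove Lemma~\ref{lem:insertion}: it is stated with a citation to Nikov and Nikova~\cite{DBLP:journals/iacr/NikovN04} and used as a black box. Your proposal, by contrast, gives a full direct proof via the block structure of $M^{(3)}$ in~\eqref{eq:M3_M}, and it is correct. The split of a putative recombination vector into weights on non-inserted versus inserted rows, together with the observation that the first $d_1$ columns of the inserted rows are all scalar multiples of the single row $\rvector_z$, is exactly the right lever; your case analysis on $c = \lvector^{(2)} M^{(2)}_{A_2}[1]$ cleanly recovers the disjunction in Definition~\ref{def:insertion}. This is the standard argument one would expect in the cited reference, so there is no meaningful methodological difference to report---you have simply supplied what the paper omits.

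One minor remark on bookkeeping: since $p_z \notin \CP^{(3)}$, any $A \subseteq \CP^{(3)}$ automatically satisfies $p_z \notin A$, so the phrase ``$A \cap \CP^{(1)}$'' always means a subset of $\CP^{(1)} \setminus \{p_z\}$; your parenthetical about treating $p_z$ separately is therefore unnecessary, though harmless. The degenerate case $A_2 = \emptyset$ you flag is indeed just the $c = 0$ branch with $\lvector^{(2)}$ vacuous.
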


\begin{lemma}\label{lem:vandemonde}
Let \emph{Vandermonde-MSP}$(n, t, \CP)$ be defined as the MSP $\left( V(n,t), \rho, \evector_1, \CP \right)$, with
$\CP = \{p_1, \ldots, p_n \}$,
$V(n,t)$ the $n \times t$ Vandermonde matrix over a finite field $\CK$,
\begin{equation*}
V(n,t) = 
\begin{pmatrix}
1  &  x_1  &  x_1^2  &  \cdots  &  x_1^{t-1} \\
1  &  x_2  &  x_2^2  &  \cdots  &  x_2^{t-1} \\
\vdots  &  \vdots  &  \vdots  &  \ddots  &  \vdots \\
1  &  x_n  &  x_n^2  &  \cdots  &  x_n^{t-1} \\
\end{pmatrix},
\end{equation*}
with $x_i \neq x_j \neq 0$, for $1 \leq i \leq j \leq n$,
$\rho$ a function that maps row $\rvector_i$ to party $p_i$, for $i \in \{1, \ldots, n\}$,
and $\evector_1 = (1, 0, \ldots, 0) \in \CK^t$.
Then, Vandermonde-MSP$(n, t, \CP)$ implements the $t$ out-of $n$ threshold access structure $\Theta_t^n(\CP)$.
\end{lemma}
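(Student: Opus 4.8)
The plan is to prove directly that Vandermonde-MSP$(n,t,\CP)$ accepts a set $A \subseteq \CP$ if and only if $|A| \ge t$. Since, in the minimal convention of this paper, the quorums of $\Theta_t^n(\CP)$ are exactly the sets of size $t$, and an MSP that accepts a set automatically accepts every superset of it (adding rows can only enlarge $\op{Im}(M_A^\intercal)$), this equivalence is precisely what it means for the MSP to implement $\Theta_t^n(\CP)$. It therefore suffices to treat the two cases $|A|\ge t$ and $|A|\le t-1$ separately. Note that here $M = V(n,t)$ has $d=t$ columns, each party $p_i$ owns the single row $\rvector_i=(1,x_i,x_i^2,\dots,x_i^{t-1})$, and the target vector is $\evector_1\in\CK^t$.

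For the case $|A|\ge t$, I would pick any $t$ of the parties in $A$ and let $M'$ be the $t\times t$ submatrix of $M_A$ consisting of the rows they own. This $M'$ is a square Vandermonde matrix with pairwise distinct nodes $x_i$, so $\det M' = \prod_{1\le a<b\le t}(x_{i_b}-x_{i_a})\neq 0$ and $M'$ is invertible. Hence the rows of $M'$ span all of $\CK^t$, in particular $\evector_1$, so there is $\lvector'\in\CK^t$ with $\lvector' M' = \evector_1$. Padding $\lvector'$ with zeros in the coordinates corresponding to the remaining rows of $M_A$ yields a recombination vector $\lvector_A$ with $\lvector_A M_A = \evector_1$; thus $A$ is accepted.

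For the case $s:=|A|\le t-1$, write $A=\{p_{i_1},\dots,p_{i_s}\}$. I would use the standard fact that $\evector_1\in\op{Im}(M_A^\intercal)$ iff $\evector_1$ is orthogonal to $\ker(M_A)$, since $\op{Im}(M_A^\intercal)=\ker(M_A)^\perp$. So it is enough to exhibit $\kappa\in\CK^t$ with $M_A\kappa=\zerovector$ and $\langle\evector_1,\kappa\rangle\neq 0$. Identifying $\kappa=(\kappa_0,\dots,\kappa_{t-1})$ with the polynomial $P(X)=\sum_{j=0}^{t-1}\kappa_j X^j$ of degree at most $t-1$, the entry of $M_A\kappa$ owned by $p_{i_\ell}$ is exactly $P(x_{i_\ell})$. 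Since $s\le t-1$, the polynomial $P(X)=\prod_{\ell=1}^{s}(X-x_{i_\ell})$ has degree $s\le t-1$ and vanishes at every $x_{i_\ell}$, so its coefficient vector $\kappa$ satisfies $M_A\kappa=\zerovector$. Its constant term is $P(0)=\prod_{\ell=1}^{s}(-x_{i_\ell})\neq 0$ because every $x_{i_\ell}\neq 0$ by hypothesis, and $\langle\evector_1,\kappa\rangle=\kappa_0=P(0)$. Hence $\evector_1\notin\op{Im}(M_A^\intercal)$, i.e., $A$ is rejected. Combining the two cases proves the lemma.

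This is essentially routine linear algebra on Vandermonde matrices, so I do not anticipate a genuine obstacle; the only points that need care are invoking $\op{Im}(M_A^\intercal)=\ker(M_A)^\perp$ to certify rejection in the second case, and observing that the hypothesis $x_i\neq 0$ is exactly what guarantees $P(0)\neq 0$ there — without it, a set of size below $t$ could already be accepted.
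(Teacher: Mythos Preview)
Your argument is correct. The paper's own proof is a two-line appeal to ``results of linear algebra'': it observes that, because the $x_i$ are pairwise distinct, $M_A$ has rank~$t$ (and hence $\op{Im}(M_A^\intercal)=\CK^t$) if and only if $|A|\ge t$, and concludes directly. Your acceptance case is exactly this rank argument, spelled out via the nonvanishing Vandermonde determinant. Where you differ is in the rejection case: rather than relying on a rank bound (which, taken literally, only shows $\op{Im}(M_A^\intercal)\subsetneq\CK^t$ and not that $\evector_1$ lies outside it), you give an explicit witness $\kappa\in\ker(M_A)$ with $\langle\evector_1,\kappa\rangle\neq 0$ via the polynomial $P(X)=\prod_\ell(X-x_{i_\ell})$, invoking $\op{Im}(M_A^\intercal)=\ker(M_A)^\perp$. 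This is a genuinely more careful route and has the side benefit of making transparent where the hypothesis $x_i\neq 0$ is used (to ensure $P(0)\neq 0$); the paper's terse formulation leaves that point implicit. Both proofs are standard Vandermonde linear algebra, so the difference is one of explicitness rather than strategy.
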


\begin{proof}
Let $A \subset \CP$ and $M_A$ the matrix consisting of the rows of $M$ owned by the members of $A$. 
From the results of linear algebra, and because $x_i$'s are pairwise different, we know that the rank of $M_A$ is maximal (that is, equal to $t$, and thus $\op{Im}(M_A^\intercal) = \CK^t$) if and only if $|A| \geq t$.
Therefore, $\CM$ accepts exactly those sets $A$ with $|A| \geq t$.
\end{proof}

\dotparagraph{Building the MSP that implements a generalized BQS}
Based on the previous lemmata, we now present Algorithm~\ref{alg:insertion_msp} that gets as input an access structure, encoded as an MBF, and outputs an MSP that implements it. 
The idea is to start with a Vandermonde matrix implementing the first in the hierarchy threshold operator and repeatedly perform insertions of the MSPs implementing the nested threshold operators.

Algorithm~\ref{alg:insertion_msp} works as follows.
Let $F = \Theta_d^m(F_1, \dots, F_m)$ be an MBF, where each $F_i$ can be a party or a nested threshold operator. 
The algorithm first creates the MSP for $F$ (lines \ref{line:one}--\ref{line:vander}) in the following way:
it extracts the values $m, d$ and $F_1, \dots, F_m$ from $F$ (line \ref{line:extract})
and examines whether each $F_i$ is a party literal or a nested operator.
In the second case, a fresh \emph{virtual party} $v_i$ is created and associated with $F_i$
(the map $V_{\text{map}}$ is used to keep track of this association).
A virtual party is treated exactly as an actual party, except it is used only during this construction.
The MSP for $F$ is now created according to Lemma \ref{lem:vandemonde} and using
both actual and virtual parties as the set \CP.
In the second part of the algorithm (lines \ref{line:secondPart_start}--\ref{line:secondPart_end})
the MSPs for the nested operators (virtual parties $v_i$) are recursively created (line \ref{line:recurse}) 
and inserted in \CM, according to Definition \ref{def:insertion_msp}.
The function $\phi$ related to the MSP $\CM$, that maps a party to the rows they own, 
is used to get the row $\rvector_i$ of $M$ that was labeled with $v_i$.
Notice that in line \ref{line:fresh_vi}, a fresh variable is created 
for each nested operator, so $v_i$ owns a single row.

For the termination of the recursion, notice that, if $F$ does not contain
any nested threshold operators, $V$ is the empty set when we reach line
\ref{line:secondPart_start}, and the algorithm returns.  The next result
therefore follows immediately from the definition of \emph{insertion} and
the fact that the algorithm starts with a $1 \times 1$ matrix.

\begin{algo*}
\vbox{
\small
\begin{numbertabbing}\reset
xxxx\=xxxx\=xxxx\=xxxx\=xxxx\=xxxx\=MMMMMMMMMMMMMMMMMMM\=\kill
	\textbf{buildMSP}$(F)$ \label{line:one}\\
	\> \( \text{let } \Theta_d^m(F_1, \ldots, F_m ) \text{ be the formula } F \) \label{line:extract}\\	
	\> \(R \gets \emptyset \) \label{}\\
	\> \(V \gets \emptyset \) \label{}\\
	\> \(V_{\text{map}} \gets \emptyset \) \label{}\\
	\> \textbf{for each} \( F_i \) \textbf{do}\label{}\\
	\>\> \textbf{if} \( F_i \) is a literal $p$ \textbf{then} \label{}\\
	\>\>\> \( R \gets R \cup \{p\} \) \label{}\\
	\>\> \textbf{else}  \label{}\\
	\>\>\> declare \( v_i \) a new virtual party  \label{line:fresh_vi}\\
	\>\>\> \( V \gets V \cup \{ v_i \} \) \label{}\\
	\>\>\> \( V_{\text{map}} \gets V_{\text{map}} \cup \{ (v_i, F_i) \} \) \label{line:create_qi}\\
	\> \( \CM \gets \op{Vandermonde-MSP}( m, d,  R \cup V) \) \label{line:vander} \\
	\> \textbf{for each} \( v_i \in V \) \textbf{do}\label{line:secondPart_start}\\
	\>\> $\CM_2 \gets$ \textbf{buildMSP}$\left( V_{\text{map}}(v_i) \right) $ \label{line:recurse}\\
	\>\> \( r_i \gets \phi(v_i) \) \label{}\\
	\>\> \( \CM \gets \CM(r_i \rightarrow \CM_2) \) \label{line:secondPart_end}\\
	\> \textbf{return} \( \CM \) \label{}
\end{numbertabbing}
}
\caption{Construction of an MSP from a monotone Boolean formula $F$.}
\label{alg:insertion_msp}
\end{algo*}

\begin{lemma}\label{lem:msp_size}
Let $F$ be an MBF that includes in total $c$ operators in the form $\Theta_{d_i}^{m_i}$. The matrix $M$ of the MSP constructed with Algorithm~\ref{alg:insertion_msp} has $m = \sum_1^k m_i - c + 1$ rows and $d = \sum_1^k d_i - c + 1$ columns. 
\end{lemma}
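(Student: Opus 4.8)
The plan is to prove the claim by structural induction on the MBF $F$, mirroring the recursion in Algorithm~\ref{alg:insertion_msp}. It is convenient to read the sums as ranging over all $c$ threshold operators occurring in $F$; writing $S_m(F)$ and $S_d(F)$ for the totals $\sum_i m_i$ and $\sum_i d_i$ over these operators, the goal becomes: \textbf{buildMSP}$(F)$ returns an MSP whose matrix has $S_m(F) - c + 1$ rows and $S_d(F) - c + 1$ columns, where $c$ is the number of threshold operators in $F$. Since \textbf{buildMSP} is only ever invoked on a formula whose top-level operator is a $\Theta$ (line~\ref{line:extract}, and the recursive call in line~\ref{line:recurse} is on a nested $\Theta$), the induction ranges exactly over such formulas.

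For the base case, take $F = \Theta_d^m(F_1,\dots,F_m)$ with every $F_i$ a literal. Then $c = 1$, $S_m(F) = m$, $S_d(F) = d$, the set $V$ of virtual parties is empty, so the loop of lines~\ref{line:secondPart_start}--\ref{line:secondPart_end} does nothing, and the algorithm returns the $m \times d$ Vandermonde MSP of Lemma~\ref{lem:vandemonde}. As $m = S_m(F) - c + 1$ and $d = S_d(F) - c + 1$, the base case holds.

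For the inductive step, write $F = \Theta_d^m(F_1,\dots,F_m)$ and let $F_{i_1},\dots,F_{i_\ell}$ be the operands that are nested threshold operators (the remaining operands being literals), where $F_{i_j}$ contains $c_{i_j}$ threshold operators. Then $c = 1 + \sum_{j=1}^{\ell} c_{i_j}$, $S_m(F) = m + \sum_{j=1}^{\ell} S_m(F_{i_j})$, and $S_d(F) = d + \sum_{j=1}^{\ell} S_d(F_{i_j})$. The algorithm first builds $\CM$ as the $m \times d$ Vandermonde MSP on $R \cup V$, which has exactly one row per operand (literal or virtual party), hence $m$ rows and $d$ columns. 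It then, for each $v_{i_j}$, computes $\CM_2 = \textbf{buildMSP}(F_{i_j})$ and replaces $\CM$ by $\CM(r_{i_j} \rightarrow \CM_2)$. By the induction hypothesis $\CM_2$ has $m^{(j)} := S_m(F_{i_j}) - c_{i_j} + 1$ rows and $d^{(j)} := S_d(F_{i_j}) - c_{i_j} + 1$ columns. Because line~\ref{line:fresh_vi} creates a fresh virtual party for each nested operator, $\phi(v_{i_j}) = \{r_{i_j}\}$ is a single row, so Definition~\ref{def:insertion_msp} applies and each insertion turns an $a \times b$ matrix into an $(a + m^{(j)} - 1) \times (b + d^{(j)} - 1)$ matrix, i.e.\ it adds $m^{(j)} - 1$ rows and $d^{(j)} - 1$ columns. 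Summing over the $\ell$ insertions, the final matrix has $m + \sum_{j=1}^{\ell} (m^{(j)} - 1) = m + \sum_{j=1}^{\ell} \bigl(S_m(F_{i_j}) - c_{i_j}\bigr) = S_m(F) - (c - 1)$ rows, and likewise $S_d(F) - (c-1)$ columns, which is what we wanted.

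The one point that needs care — and the only real obstacle — is verifying that the successive insertions do not interfere with each other, so that the contribution of the $j$-th insertion is really $m^{(j)}-1$ rows and $d^{(j)}-1$ columns no matter in which order the loop processes $V$. Concretely, after inserting $\CM_2$ at row $r_{i_1}$, the rows owned by the remaining virtual parties $v_{i_2},\dots,v_{i_\ell}$ must still be present as single rows carrying the same owners, so that line~\ref{line:secondPart_end} is well defined on the next iteration. This follows directly from Equation~(\ref{eq:M3_M}): every row of $M^{(1)}$ other than the replaced $\rvector_z$ survives verbatim (merely padded with $\zerovector^{d_2-1}$) and keeps its $\rho^{(1)}$-label, so $\phi$ continues to map each untouched $v_{i_j}$ to exactly one row. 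With this observation the arithmetic of the inductive step goes through unchanged, completing the induction and hence the proof of the lemma.
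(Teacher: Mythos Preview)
Your proof is correct and follows essentially the same idea as the paper, which merely remarks that the result ``follows immediately from the definition of insertion and the fact that the algorithm starts with a $1\times 1$ matrix.'' You have supplied the full structural induction and the additional check that successive insertions do not disturb the remaining virtual-party rows, which the paper leaves implicit.
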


Lemma~\ref{lem:msp_size} implies that the resulting matrix $M$ has size linear in the length of~$F$. 
In the special case that each party appears only once in the access structure, $M$ has $n$ rows and at most $n$ columns, where $n = | \CP|$.

\dotparagraph{Checking for quorums}
We now show how to determine whether a set constitutes a quorum using the MSP representation of the system and no other information about the BQS (e.g., whether it is a threshold or a generalized BQS, or whether it was specified using threshold or other operators).

We have seen that an MSP accepts a set $A$ if and only if the rows of $M_A$ span the vector $\evector$,
or, equivalently, the linear system $M_A^T \xvector = \evector$ has solutions for $\xvector$. 
According to linear algebra, a necessary and sufficient condition for this is that  
the rank of $M_A^T$ is equal to the rank of the augmented matrix $M_A^T|\evector$. To check this condition, we perform Gaussian elimination on the augmented matrix $M_A^T|\evector$ and bring it in row echelon form. If it contains a row with only zeros in the coefficient part but a nonzero value in corresponding constant part, then the rank of $M_A^T|\evector$ is bigger that the rank of $M_A^T$ and $A$ is not an authorized group, otherwise, $A$ is authorized.

Gaussian elimination has a cubic time complexity, so it is expensive to perform it every time we wish to check for a quorum. As an optimization we use the LUP-decomposition of 
matrix $M^T$, i.e., we calculate the $d \times d$ matrices $P$ and $L$, and the $d \times m$ matrix $U$, such that $P M^T =L U$. Then, for any set $A$ we get $P M_A^T =L U_A$, where $P$ and $L$ do not depend on $A$. In the initialization of the protocol we solve $L \yvector = P \evector$ for $\yvector$, where $\yvector$ is a $d$-vector. Then, instead of the equation $M_A^T \xvector = \evector$ we can work with the equation $U_A \xvector = y$. In order to check whether a set $A$ is authorized, we now have to bring $U_A|\yvector$ in row echelon form. Since $U_A$ is an upper triangular matrix, some computational steps are avoided.

Notice here that it might be the case that $A$ is a superset of an authorized group. These redundant parties can easily be identified from the echelon form, as they will correspond to the free variables of the system -- variables whose corresponding column does not contain a pivot. Another situation worth to mention is that a party can own more than one rows of $M$. However, the algorithm described above also works in this case, since $M_A$ will contain all rows owned by parties in $A$.

\subsection{Concrete constructions of Byzantine quorum systems}
We now consider two specific families of generalized BQS that have been studied in the literature and show how they can be encoded as MSPs.

\dotparagraph{Attribute-based BQS}
A BQS of this family is defined over a set of attributes, which are associated with the parties, and a quorum is described in terms of required attributes.
Let $\CX = \{ \chi_1, \ldots, \chi _r\}$  denote the set of attributes and $\Psi \subseteq \CP \times \CX$ the relation between parties and attributes. We say that party $p_j$ \emph{holds} an attribute $\chi$ whenever $(p_j , \chi) \in \Psi$.
An \emph{attribute-based MBF} is a monotone Boolean formula $F(\chi_1, \ldots, \chi_r)$ over the attributes $\CX$
and implements a BQS where a set $A \subseteq \CP$ is a quorum 
whenever the attribute set $\{ \chi \in \CX \mid \exists p \in A: (p, \chi) \in \Psi \}$, collectively held by the parties in $A$, satisfies~$F$.
By adding one more syntactic rule, we can also specify the requirement that an attribute is held by at least a number of parties.
Let each $\chi_i \in \CX$ be related with $L_i$ parties, 
i.e., $\{ p \in \CP \mid (p, \chi_i) \in \Psi \} = L_i$, and let $\ell_i \leq L_i$.
Then, a formula $F(\chi_1^{(\ell_1)}, \ldots, \chi_r^{(\ell_r)})$
specifies that $A$ is a quorum if, in addition to the aforementioned condition,
each $\chi_i$ is held by at least $l_i$ parties, i.e.,
$| \{ p \in A \mid (p, \chi_i) \in \Psi \} | \geq \ell_i$, for $ 1 \leq i \leq r$.

An MSP $\CM = (M, \rho, \evector_1, \CP)$ that implements $F(\chi_1^{(\ell_1)}, \ldots, \chi_r^{(\ell_r)} )$ can be constructed as follows. 
First, an MSP $\CM^\prime = (M^\prime, \rho^\prime, \evector_1^\prime, \CF)$ is created for $F(\chi_1, \ldots, \chi_r)$,
using the methods presented in the previous sections.
Then an insertion $\CM^\prime(r_i \rightarrow \CM_i)$ is performed for every $\chi_i$,
as described in Definition~\ref{def:insertion_msp},
where $\CM_i$ is an MSP such that $M_i$ is the $L_i \times \ell_i$ Vandermonde matrix
and $\rho_i$ is a function labelling the rows of $M_i$ with the parties related to $\chi_i$.
Notice that the resulting MSP $\CM$ is defined on the set of parties \CP and not the set of attributes \CX.

We now instantiate the attribute-based BQS mentioned in the introduction using this methodology.
Recall that there are two families of attributes, location and operating system. 
We use the attributes $\{ \chi_{11}, \chi_{12}, \chi_{13}, \chi_{14}\}$  for the four different locations and 
the attributes $\{ \chi_{21}, \chi_{22}, \chi_{23}, \chi_{24}\}$  for the four different OS.
The 16 parties are arranged in a four by four grid, so that each party is related with exactly one attribute from each family.
The system tolerates the simultaneous failure of all parties in one location and all parties with a specific OS. 
Thus, a set is a quorum if it contains at least three parties with different OS for at least three different locations.
This BQS is implemented by the attribute-based MBF 
\begin{equation*}
\Theta_{3}^{4}\left( \chi_{11}^{(3)}, \chi_{12}^{(3)}, \chi_{13}^{(3)}, \chi_{14}^{(3)} \right) 
\land \Theta_{3}^{4}\left( \chi_{21}^{(3)}, \chi_{22}^{(3)}, \chi_{23}^{(3)}, \chi_{24}^{(3)} \right) .
\end{equation*}
Following the method described above, a $4 \times 3$ Vandermonde matrix will be inserted in every $\chi_{ij}^{(3)}$ when creating the MSP, 
which, according to Lemma \ref{lem:msp_size}, will have dimensions $32 \times 22$.

\dotparagraph{The M-Grid BQS}
Malkhi \etal~\cite{DBLP:journals/siamcomp/MalkhiRW00} proposed the \emph{M-Grid} system, 
a family of BQS where
$n = k^2$ parties are arranged in a $k \times k$ grid 
and up to $b$ parties are allowed to be Byzantine, with $b \leq (\sqrt{n+1}) / 2$.
A quorum consists of any $\sqrt{b+1}$ rows and $\sqrt{b+1}$ columns.
Actually, the M-Grid was proposed as a \emph{Byzantine masking quorum system}~\cite{DBLP:journals/dc/MalkhiR98}, 
a category of BQS that requires a stronger intersection property than the
Byzantine dissemination quorum systems, but one can adapt the construction accordingly.

For a dissemination BQS, the requirement for $b$ is $b \leq \sqrt{n} - 1$ 
and a quorum consists of any $\sqrt{b/2+1}$ rows and $\sqrt{b/2+1}$ columns.
To see this, notice that if two quorums $Q_1$ and $Q_2$ have a row or a column in
common, then $|Q_1 \cap Q_2| \geq \sqrt{n} \geq b + 1$. 
Otherwise, the intersection of $Q_1$'s columns with $Q_2$'s rows is disjoint from the 
intersection of $Q_2$'s columns with $Q_1$'s rows, so $|Q_1 \cap Q_2| \geq 2\sqrt{b/2+1}\sqrt{b/2+1} > b + 1$.
In both cases, the \emph{consistency} property of a BQS is satisfied.

To encode the M-Grid BQS
we define the attribute set $\CX = \{ R_1, \ldots	, R_{k}, C_1, \ldots, C_{k} \}$
and assign the party $s_{ij}$ at row $i$ and column $j$ the attributes $R_i$ and $C_j$.
The attribute-based MBF related to this BQS family is
\begin{equation*}
\Theta_{\sqrt{b/2+1}}^{k}\left( R_1^{(k)}, \ldots, R_{k}^{(k)} \right) 
\land \Theta_{\sqrt{b/2+1}}^{k}\left(  C_1^{(k)}, \ldots, C_{k}^{(k)} \right).
\end{equation*}
The formula has $3 + 2k$ threshold operators, considering the \emph{and} operator as a 2-out-of-2 threshold and
recalling that our method inserts a $k \times k$ MSP in the attributes $R_i^{(k)}$ and $C_j^{(k)}$.
The resulting MSP that implements the M-Grid BQS has
$2n$ rows and 
$2n + 2( \sqrt{b/2+1} - k ) < 2n$ columns, by Lemma~\ref{lem:msp_size}.

\section{Consensus using generalized quorums systems}\label{sec:consensus}

HotStuff~\cite{DBLP:conf/podc/YinMRGA19} is an efficient leader-driven Byzantine fault-tolerant state-machine replication (SMR) algorithm. The nodes that take part in the protocol are separated into \emph{replicas}, which actually run the protocol, and \emph{clients}, which submit requests to the replicas and receive totally-ordered responses. The trust assumptions are specified by the number of replicas $n$ and the number of tolerated faults $f$. The replicas maintain a tree structure, whose \emph{nodes} contain batches of clients' commands and get committed in a monotonically increasing way. Two nodes \emph{conflict} if none of them extends from the other. 

HotStuff is presented in three versions, the so-called \emph{basic}, \emph{chained}, and \emph{implemented}. In the \emph{basic} version, each view consists of four phases, called \emph{prepare}, \emph{pre-commit}, \emph{commit}, and \emph{decide}. In each phase, the leader waits for $n - f$ different vote messages from the replicas, constructs a \emph{quorum certificate} (QC) upon receiving them, and starts the next phase by broadcasting this certificate to the replicas. The view changes in the end of the \emph{decide} phase, or whenever the replicas time out waiting for a leader's message. Each view has a deterministically determined leader.
The \emph{chained} version pipelines the four phases into one \emph{generic} phase. This serves as the prepare phase for the new node in the tree, as the pre-commit phase for the previous node and so on, so that the four phases map to four successively ordered requests.
Finally, the implemented version presents further optimizations.
The prototype implementation of threshold HotStuff, which we also use for generalized HotStuff, is based on the \emph{implemented} version.

The generalized HotStuff protocol is \emph{instantiated} with a Byzantine quorum system, which specifies its trust assumptions. 
The leader now collects votes from a quorum of processes and constructs a QC by concatenating them. Upon receiving the QC, the replicas
validate the signatures, as well as the fact that the voters indeed form a quorum.	A quorum of processes is also required to trigger a view change
against a faulty leader.

The pseudocode of basic HotStuff with generalized BQS is presented in Algorithm \ref{alg:basic-hotstuff}. We give a brief description of the data structures used and refer to~\cite{DBLP:conf/podc/YinMRGA19} for more details.
A message consists of four fields, \type, \viewNumber, \node, and \justify. 
The \type can be one of \str{new-view}, \str{prepare}, \str{pre-commit}, \str{commit}, \str{decide}. 
The \viewNumber is always populated with the current view number. 
The field \node is used in the \emph{prepare} phase by the leader to propose the new leaf node, as well as by replicas in vote messages. 
Finally, \justify is always used by the leader to send a valid QC and by the replicas to send their \prepareQC in a \str{new-view} message. 
A vote message, sent by replicas, additionally contains a signature over the fields \type, \viewNumber, \node.
The QC data structure consists of four fields, \type, \viewNumber, \node, and \sig. 
The \type can be one of  \str{prepare}, \str{pre-commit}, \str{commit} and is used to indicate the phase in which the votes used to construct the QC were cast. 
The fields \viewNumber and \node indicate the view in which the QC was created and the node it justifies, respectively. 
Finally, the field \sig contains the signatures on the vote messages of the quorum that was used to construct the QC.

In the pseudocode we omit the details related to the signing and verification of the messages, the verification of a QC and the signing of the vote messages. 
We denote as $p_\ell$ the leader of a view. As in the original protocol, this could be 
any deterministic function from the view number to the replicas, as long as it eventually proposes a correct leader. 
If an interrupt happens when replicas are waiting for a message, line~\ref{line:new-view-interrupt} is executed.
The variables \newview, \prepares, \precommits, and \commits, used by the leader to store the votes until a quorum
is received, are emptied in each view (not shown for brevity).

Hotstuff works in the partial-synchrony model~\cite{DBLP:journals/jacm/DworkLS88},
where there is 
an unknown \emph{Global Stabilization Time (GST)}, after which
the communication between two correct replicas becomes synchronous.
The safety of the HotStuff protocol as presented in~\cite{DBLP:conf/podc/YinMRGA19} is based on the properties of threshold Byzantine quorum systems, namely the $n > 3f$ condition. 
In the generalized protocol the safety is reduced to the properties of the generalized BQS.
The generalized version of HotStuff satisfies the same safety and liveness theorems as threshold HotStuff, which we now present and prove for the generalized case.

\begin{algo*}
\vbox{
\small
\begin{numbertabbing}\reset
 xxxx\=xxxx\=xxxx\=xxxx\=xxxx\=xxxx\=MMMMMMMMMMMMMMMMMMM\=\kill
 \textbf{State} \quad \(\prepareQC \gets \perp\); \(\lockedQC \gets \perp\); \(\curView \gets 1 \) \\
  
  // \textit{PREPARE} phase \\
 \textbf{upon} receiving a message \( \msg{new-view}{\viewNumber, \node, \justify} \) from $p_j$ \` 		// only leader $p_{\ell}$ \label{}\\
 \>\>\>\textbf{such that} $ \viewNumber = \curView - 1 $ \textbf{do} \label{}\\ 
 \> \( \newview[j] \gets \justify \) \label{}\\
 \>\textbf{if  exists} \( \{ p_k \in \CP ~|~ \newview[k] \neq \perp \} \in \CQ \) \textbf{then} \label{} \\
 \>\> \( V =\lbrace \newview[k] ~|~ \newview[k] \neq \perp \rbrace \); \( \highQC \gets \text{argmax}_{v \in V }(v.\viewNumber) \)\label{}\\
 \>\> \( \var{curProposal} \gets \text{new node} \) \label{}\\
 \>\> \( \var{curProposal}.\var{parent} \gets \highQC.\node \);  \( \var{curProposal}.\var{cmd} \gets \text{client's command}  \)  \label{}\\
 \>\> send message \( \msg{\str{prepare}}{\curView, \var{curProposal}, \highQC} \) to all $p_{j} \in \CP$ \label{} \\
 \\
 \textbf{upon} receiving a message \( \msg{prepare}{\viewNumber, \node, \justify} \)  from $p_\ell$ \textbf{such that} $ \viewNumber = \curView $ \textbf{do}\label{}\\
 \>\textbf{if} \( \node \text{ extends from \justify.\node} \) \label{}\\
 \>\>\>\textbf{and} \( ( \node \text{ extends from } \lockedQC.\node  \) \label{line:safety}\\
 \>\>\>\textbf{or} \(  \justify.\viewNumber > \lockedQC.\viewNumber ) \) \textbf{then}\label{line:liveness}\\
 \>\> send vote message \(  \msg{\str{prepare}}{\curView, \node, \perp}  \) to $p_{\ell}$  \label{}\\
 \\ 
 // \textit{PRE-COMMIT} phase \\
 \textbf{upon} receiving a vote message \( v = \msg{prepare}{\viewNumber, \node, \justify} \) from $p_j$ \` 		// only leader $p_{\ell}$ \label{}\\
 \>\>\>\textbf{such that} $ \viewNumber = \curView$ \textbf{do} \label{}\\ 
 \> \( \prepares[j] \gets v \) \label{}\\
 \>\textbf{if  exists} \( \{ p_k \in \CP ~|~ \prepares[k] \neq \perp \} \in \CQ \) \textbf{then} \label{} \\
 \>\> \( V =\lbrace \prepares[k] ~|~ \prepares[k] \neq \perp \rbrace \); \( \prepareQC \gets \op{QC} \left( V \right)  \) \label{} \\
 \>\> send message \( \msg{\str{pre-commit}}{\curView, \perp, \prepareQC} \) to all $p_{j} \in \CP$  \label{}\\
 \\
 \textbf{upon} receiving a message \( \msg{pre-commit}{\viewNumber, \node, \justify} \) from $p_\ell$ \label{}\\
 \>\>\>\textbf{such that} $ \viewNumber = \curView $ \textbf{and} $\justify.type = \str{prepare}$ \textbf{do} \label{}\\
 \> \( \prepareQC \gets \justify  \) \label{} \\
 \> send vote message \( \msg{\str{pre-commit}}{\curView, \justify.\node, \perp} \) to $p_\ell$  \label{}\\
 \\ 
// \textit{COMMIT} phase \\
 \textbf{upon} receiving a vote message \( v = \msg{pre-commit}{\viewNumber, \node, \justify} \) from $p_j$ \` 		// only leader $p_{\ell}$ \label{}\\
 \>\>\>\textbf{such that} $ \viewNumber = \curView$ \textbf{do}\label{} \\ 
 \> \( \precommits[j] \gets v \) \label{}\\
 \>\textbf{if  exists} \( \{ p_k \in \CP ~|~ \precommits[k] \neq \perp \} \in \CQ \) \textbf{then} \label{} \\
 \>\> \( V =\lbrace \precommits[k] ~|~ \precommits[k] \neq \perp \rbrace \); \( \precommitQC \gets \op{QC} \left( V \right)  \) \label{} \\
 \>\> send message \( \msg{\str{commit}}{\curView, \perp, \precommitQC} \) to all $p_{j} \in \CP$ \label{} \\
 \\
 \textbf{upon} receiving a message \( \msg{commit}{\viewNumber, \node, \justify} \) from $p_\ell$ \label{} \\
 \>\>\>\textbf{such that} $ \viewNumber = \curView $ \textbf{and} $\justify.type = \str{pre-commit}$ \textbf{do} \label{}\\
 \> \( \lockedQC \gets \justify  \)  \label{}\\
 \> send vote message \(  \msg{\str{commit}}{\curView, \justify.\node, \perp} \) to $p_\ell$ \label{} \\
 \\
 // \textit{DECIDE} phase \\
 \textbf{upon} receiving a vote message \( v = \msg{commit}{\viewNumber, \node, \justify} \) from $p_j$ \` 		// only leader $p_{\ell}$ \label{}\\
 \>\>\>\textbf{such that} $ \viewNumber = \curView$ \textbf{do} \label{}\\ 
 \> \( \commits[j] \gets v \) \label{}\\
 \>\textbf{if  exists} \( \{ p_k \in \CP ~|~ \commits[k] \neq \perp \} \in \CQ \) \textbf{then} \label{} \\
 \>\> \( V =\lbrace \commits[k] ~|~ \commits[k] \neq \perp \rbrace \); \( \commitQC \gets \op{QC} \left( V \right)  \) \label{} \\
 \>\> send message \( \msg{\str{decide}}{\curView, \perp, \commitQC} \) to all $p_{j} \in \CP$  \label{}\\
 \\
 \textbf{upon} receiving a message \( \msg{decide}{\viewNumber, \node, \justify} \) from $p_\ell$ \label{}\\
 \>\>\>\textbf{such that} $ \viewNumber = \curView $ \textbf{and} $\justify.type = \str{commit}$ \textbf{do} \label{}\\
 \> \textbf{output }\( \op{decide}\left( \justify.\node \right)  \) \label{} \\
 \> send message \( \msg{\str{new-view}}{\curView, \perp, \prepareQC} \) to $p_{\ell+1}$  \label{line:new-view-interrupt}
   \end{numbertabbing}
}
\caption{Basic HotStuff, code for process $p_i$}
\label{alg:basic-hotstuff}
\end{algo*}

\begin{theorem}\label{thm:basic-hotstuff-safety}
If $w$ and $b$ are conflicting nodes, they cannot be both decided, each by a correct replica. 
\end{theorem}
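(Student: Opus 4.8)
The plan is to adapt the classical HotStuff safety argument, replacing every use of the threshold bound $n>3f$ with the \emph{consistency} property of the generalized BQS from Definition~\ref{def:bqs}. First I would set up the key structural fact about locked certificates: if a \commitQC $qc$ (a \str{pre-commit}-type QC) is formed in some view $v$ for node $w$, then no \prepareQC can be formed in the same view $v$ for any node $w' \neq w$. This is the per-view uniqueness lemma, and it follows from consistency: the set of voters $Q_1$ that produced the \str{pre-commit} votes for $w$ and the set $Q_2$ that produced the \str{prepare} votes for $w'$ are both quorums, so $Q_1\cap Q_2 \not\subseteq F$ for the actual fail-prone set $F\supseteq B$; hence some \emph{correct} replica is in $Q_1\cap Q_2$, and a correct replica sends at most one \str{prepare} vote per view (to one node) and never votes \str{pre-commit} for a node unless it first sent a \str{prepare} vote for it — contradiction.

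Next I would prove the induction that carries locks forward. Suppose for contradiction that conflicting nodes $w$ and $b$ are both decided by correct replicas; then there exist a \commitQC for $w$ in some view $v_1$ and a \commitQC for $b$ in some view $v_2$, and without loss of generality $v_1 \le v_2$. If $v_1 = v_2$, the per-view uniqueness lemma already gives a contradiction (a \commitQC for $w$ and a \commitQC for $b$ both require \prepareQC-level agreement in that view). So assume $v_1 < v_2$. Let $v^\ast$ be the smallest view strictly greater than $v_1$ in which a \prepareQC is formed for some node $b'$ that conflicts with $w$; such a view exists and is at most $v_2$, since the \commitQC for $b$ at view $v_2$ is preceded by a \prepareQC for $b$ in view $v_2$. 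I would then derive a contradiction at view $v^\ast$: the quorum $Q$ that produced the \str{prepare} votes for $b'$ in view $v^\ast$ intersects — in a correct replica $r$, again by consistency — the quorum $Q_c$ that produced the \str{commit} votes underlying the \commitQC for $w$ in view $v_1$. That correct replica $r$ set $\lockedQC.\viewNumber \ge v_1$ (with $\lockedQC.\node = w$) when it sent its \str{commit} vote in view $v_1$, at line where \lockedQC is updated. For $r$ to later send a \str{prepare} vote for the conflicting $b'$ in view $v^\ast$, the \safeNode check (lines~\ref{line:safety}--\ref{line:liveness}) must pass: either $b'$ extends $r$'s locked node $w$ — impossible since $b'$ conflicts with $w$ — or $\justify.\viewNumber > \lockedQC.\viewNumber \ge v_1$, i.e., the \prepareQC carried in the \str{prepare} message for $b'$ has view number in the open interval $(v_1, v^\ast)$. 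But that \prepareQC justifies a node from which $b'$ extends; since $b'$ conflicts with $w$, that node also conflicts with $w$, contradicting the minimality of $v^\ast$. (Here I also use the per-view uniqueness lemma to rule out the boundary case where the justifying \prepareQC has view exactly $v_1$: in view $v_1$ only \prepareQCs consistent with $w$ can form.)

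The main obstacle, and where the generalization does real work, is making sure that every place the original proof invokes "two quorums of size $n-f$ intersect in a correct replica" is discharged \emph{purely} by the consistency property with the correct choice of fail-prone set $F$ (the one guaranteed by Definition~\ref{def:fs} to contain the actual Byzantine set $B$), and that "a correct replica behaves consistently" is pinned to concrete lines of Algorithm~\ref{alg:basic-hotstuff} — namely that a correct replica emits at most one vote of each phase type per view, votes \str{pre-commit}/\str{commit} only after the matching earlier vote, and updates \lockedQC monotonically. A secondary subtlety is bookkeeping around QCs formed by a \emph{faulty} leader: the argument must only rely on the votes inside a QC coming from a genuine quorum (which replicas verify, as noted in the protocol description), not on the leader being honest. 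Once these intersection-to-correct-replica steps are localized and the monotonicity of \lockedQC is stated as a small invariant, the rest is the standard minimal-counterexample induction on view numbers.
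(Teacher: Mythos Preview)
Your overall plan matches the paper's proof closely: take a minimal view $v^\ast>v_1$ in which a \prepareQC for a node conflicting with $w$ appears, intersect its voting quorum with the quorum behind the \commitQC for $w$ using BQS \emph{consistency}, and argue that the correct replica in the intersection cannot pass the check at lines~\ref{line:safety}--\ref{line:liveness}. In fact your choice of $v^\ast$ as the first view with a \prepareQC for \emph{any} conflicting node is cleaner than the paper's $v_s$, which is tied to the specific node~$b$ and makes the paper's ``by the minimality of $v_s$'' appeals somewhat loose.

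There is, however, a real gap in the step ``that \prepareQC justifies a node from which $b'$ extends; since $b'$ conflicts with $w$, that node also conflicts with $w$.'' This inference is false in general: if $\justify.\node$ is a proper ancestor of $w$, then $b'$ can extend $\justify.\node$ while still conflicting with $w$, and the minimality of $v^\ast$ yields no contradiction because $\justify.\node$ does not conflict with $w$. A related issue is the implicit assumption that $r$'s \lockedQC still satisfies $\lockedQC.\node=w$ in view $v^\ast$; $r$ may have received a valid \precommitQC in some intermediate view and updated \lockedQC. Both problems are repaired by proving, by induction on the view number over $[v_1,v^\ast)$, the stronger invariant that every valid \prepareQC formed in those views is for a node that extends~$w$. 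Then any intermediate update of $r$'s \lockedQC keeps $\lockedQC.\node$ on the branch through $w$, so line~\ref{line:safety} still fails for $b'$; and $\justify.\node$, having $\justify.\viewNumber\in(v_1,v^\ast)$, must extend $w$, which forces $b'$ to extend $w$ as well --- the desired contradiction. The paper glosses over exactly this point with its terse minimality appeal; your write-up would be stronger if you made this induction explicit.
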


\begin{proof}
Let $\qc_1$ and $\qc_2$ be the valid certificates, with $\qc_1$ created with the votes of a quorum $Q_1$ and $\qc_2$ with the votes of a quorum $Q_2$, that convinced the two replicas to decide, that is $\qc_1.\type = \str{commit}$, $\qc_1.\node = w$, $\qc_2.\type = \str{commit}$, $\qc_2.\node =b $. 
Also, let $\qc_1.\viewNumber = v_1$ and $\qc_2.\viewNumber = v_2$. First note that $v_1$ and $v_2$ cannot be the same. That would mean that the votes in $Q_1$ and $Q_2$ were cast in the same view, which would require the replicas in $Q_1 \cap Q_2$ to vote twice in that view. But this is impossible, since algorithm \ref{alg:basic-hotstuff} allows replicas to vote only once in the \emph{commit} phase $Q_1 \cap Q_2$ contains at least on correct replica. W.l.o.g. let $v_1 < v_2$ and let $v_2$ be the first view after $v_1$ for which a conflicting block is decided.

For $\qc_2$ to be created, according to algorithm \ref{alg:basic-hotstuff} there must first have been a valid prepareQC for node $b$. This could have been formed in view $v_2$ or in an earlier. Let $v_s$ be the first view after $v_1$ in which a valid prepareQC $\qc_s$ was formed. So, $\qc_s.\type = \str{prepare}$, $\qc_s.\node = b$ and $\qc_s.\viewNumber = v_s$ and $Q_s$ is a quorum of replicas, whose votes where used to create $qc_s$.

Consider now a replica $r$ that voted for $\qc_1$ and $\qc_s$, i.e. $r \in Q_1 \cap Q_s$. During view $v_1$, $r$ must had received a valid \precommitQC and set it to its \lockedQC, with $\lockedQC.\node = w$, before casting its vote for the \commitQC $\qc_1$. Let us examine now the \emph{prepare} phase of view $v_s$, in which the leader proposed the new block $b$, and specifically the conditions in lines \ref{line:safety} and \ref{line:liveness}. By the minimality of $v_s$, $r$ was still locked on $\lockedQC$ in that phase. By assumption $b$  and $w$ were conflicting nodes, so the condition in line \ref{line:safety} was \false.  Moreover, $\justify.\viewNumber$ was not larger than $\lockedQC.\viewNumber = u_1$, again by the minimality of $v_s$, because that would mean that a valid \prepareQC was created in a view smaller than $v_s$. So the condition in line \ref{line:liveness} was also \false. As a result, every replica in $r \in Q_1 \cap Q_s$ must be faulty. But this contradicts the quorum intersection property, thus such $\qc_1$ and $\qc_2$ cannot exist.
\end{proof}

\begin{theorem}\label{thm:basic-hotstuff-liveness}
After GST, there exists a bounded time period $T_f$ such that if all correct replicas remain in view $v$ during $T_f$ and the leader for view $v$ is correct, then a decision is reached.
\end{theorem}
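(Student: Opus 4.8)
The plan is to follow the liveness argument of the original HotStuff paper, adapting the one place where the threshold bound $n>3f$ is used to the corresponding property of a generalized BQS. The key structural fact I would isolate first is a lemma about \emph{locked} quorum certificates: if a correct replica is locked on $\lockedQC$ with $\lockedQC.\viewNumber = v'$, then there is a valid $\prepareQC$ for the same node formed in view $v'$, and more importantly, there is no valid $\prepareQC$ for a conflicting node formed in any view $v''$ with $v'' \ge v'$ up to the current view. This follows by the same minimal-counterexample reasoning as in Theorem~\ref{thm:basic-hotstuff-safety}, using \emph{consistency} of the BQS ($Q_1\cap Q_2\not\subseteq F$) in place of the threshold intersection. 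In particular, the \emph{highest} locked QC among all correct replicas at the start of view $v$ is on a node that is not in conflict with whatever the correct leader can safely propose.

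With that in hand, the main argument runs as follows. First I would argue that after GST every message between correct replicas arrives within the known bound $\Delta$, so there is a $T_f$ (a small constant times $\Delta$, covering the four phases plus the new-view collection) such that, if all correct replicas stay in view $v$ for $T_f$ and $p_\ell$ is correct, the leader actually receives $\msg{new-view}{}$ messages from a full quorum $Q$ — this uses \emph{availability}, namely that $\CP\setminus B$ contains a quorum, so the set of correct replicas suffices. The leader then computes $\highQC$ as the argmax over the received $\prepareQC$s. The crucial step is to show the leader's proposal $\node$ satisfies the \safeNode predicate (lines~\ref{line:safety}--\ref{line:liveness}) \emph{at every correct replica}: $\node$ extends $\highQC.\node$ by construction, and for each correct replica $r$ locked on $\lockedQC$, either $\highQC.\viewNumber \ge \lockedQC.\viewNumber$ (so line~\ref{line:liveness} holds) or $\lockedQC.\viewNumber > \highQC.\viewNumber$; in the latter case $r$'s own $\prepareQC$ would have had view at least $\lockedQC.\viewNumber$, contradicting that $\highQC$ is the maximum over a quorum that intersects $r$'s locking quorum in a correct replica. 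Hence line~\ref{line:safety} holds via the locked-QC lemma. Once \safeNode holds everywhere, all correct replicas vote in \emph{prepare}; by availability those votes form a quorum, the leader builds $\prepareQC$, and the same goes through for \emph{pre-commit}, \emph{commit}, and \emph{decide} in turn, so a decision is reached within $T_f$.

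The main obstacle — and the only place the generalization really bites — is the \safeNode step: proving that the correct leader's $\highQC$ dominates the lock of \emph{every} correct replica. In the threshold setting this is immediate because any two quorums share more than $f$ replicas, hence a correct one; here I must invoke \emph{consistency} of the BQS with respect to the fail-prone set $B$, to conclude that the new-view quorum $Q$ and any past quorum $Q_s$ used to lock a correct replica intersect in a correct replica, which then reported a sufficiently high $\prepareQC$. I would state this as a small intersection lemma up front and cite Definition~\ref{def:bqs}. The remaining pieces — bounding $T_f$, using availability to guarantee quorum-sized sets of correct votes at each phase, and the phase-by-phase induction — are routine once the predicate step is settled, so I would keep them brief.
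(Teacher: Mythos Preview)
Your argument is essentially the paper's: use BQS \emph{consistency} to conclude that the leader's new-view quorum $Q_1$ and the quorum $Q_2$ that produced the highest lock share a correct replica, so the correct leader's \highQC dominates every correct replica's \lockedQC and the \safeNode predicate passes everywhere; availability then carries the view through all four phases. The paper's proof is far terser---it neither spells out availability nor introduces your preliminary locked-QC lemma---but the one substantive step (the intersection argument in place of $n>3f$) is identical; the only slip in your write-up is that line~\ref{line:liveness} requires \emph{strict} inequality, so the equality case $\highQC.\viewNumber = \lockedQC.\viewNumber$ must be discharged via line~\ref{line:safety}, not line~\ref{line:liveness}.
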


\begin{proof}
Assume a correct leader that collects \str{new-view} messages from a quorums $Q_1$ of replicas. Let $\qc_l$ be the highest \lockedQC among all replicas. There must be at least a quorum $Q_2$ of replicas that have received (and voted for) a prepareQC $\qc_p$ that matches $\qc_l$. By the quorum intersection property, $Q_1 \cap Q_2$ contains a non-empty set of non-faulty replicas, through which the leader will learn $\qc_p$ and use it as its \highQC in the \str{prepare} message. Since all the correct replicas remain in view $v$, they will vote in all the phases and a decision will be reached.
\end{proof}

In Appendix~\ref{sec:impl_hs}, Algorithm~\ref{alg:implemented-hotstuff} we also show the generalized \emph{implemented} HotStuff, so as to document our changes with regard to~\cite{DBLP:conf/podc/YinMRGA19}.

\section{Evaluation}
\label{sec:eval}

We have implemented general BQS in
HotStuff~\cite{DBLP:conf/podc/YinMRGA19}\footnote{We used the
  prototype implementation available at
  https://github.com/hot-stuff/libhotstuff.}.
The new functionality has been added in the form of a C++ library into the existing code base. We use \emph{nholmann-json}~\cite{JSON} to parse the user-defined quorum-specification file
and Shoup's \emph{NTL}~\cite{NTL} for linear algebra over $\BZ_p$.
As in the original version of HotStuff, our implementation uses secp256k1 for all signatures. The prototype code does not make use of threshold signatures, instead stores all the received votes for a block and verifies them independently. We keep the same logic for our generalized quorum votes.

\dotparagraph{Setup}
In our evaluations, we report on benchmarks with four different versions of HotStuff that differ in the way how replicas and clients encode quorums. Their features are summarized in Table~\ref{tab:systems}.  
In the original HotStuff algorithm (\emph{\off}), replicas and clients know the parameters $n$ and $f$, the number of total replicas and failures, respectively, and determine whether they have received messages from a quorum by counting. 
In \emph{\form} the replicas and the clients are given the Byzantine quorum system, which can be a threshold or a generalized BQS, encoded as a monotone Boolean formula. Here we use Algorithm~\ref{alg:isAuthorized_formula} to check for quorums.
For \emph{\on}, replicas and clients are given an MSP-encoded BQS, again threshold or generalized, and use the algorithm of Section \ref{sec:encode_msp} to decide whether a set of parties is a quorum.
According to the standard practice, replicas use batching to amortize various expensive operations (signatures and potentially Gaussian elimination) over multiple requests.  However, the clients collect responses individually for every single request. This incurs a large cost that is not part of the replication protocol per se but is due to the way how clients produce requests and check for quorums.
For this reason, we experiment also with a fourth protocol, called \emph{\mix}, where only the replicas use an MSP. In this setting, the clients are mapped to replicas. Since the replicas receive and verify batches of requests at once, there is no further need to perform the quorum check on individual requests.

\begin{table}
    \caption{The evaluated protocols.}
    \label{tab:systems}
 	 \centering
    \begin{tabular}{|c|c|c|c|} 
      \hline
      &\multicolumn{2}{|c|}{BQS implementation in}& Supported\\
      System & replicas 		&clients 					& types of BQS\\
      \hline
      \off 		& counting 		& counting				& threshold\\
      \form 		& MBF	 		& MBF					& threshold \& generalized\\
      \on 		& MSP 			& MSP					& threshold \& generalized\\
      \mix 		& MSP 			& counting				& threshold \& generalized\\
      \hline
    \end{tabular}
\end{table}

The evaluation in the original HotStuff paper~\cite{DBLP:conf/podc/YinMRGA19} uses a batch size of 400 because the latency of batching becomes higher than the cost of cryptographic operations with larger batches.  Hence, we run all our experiments with batch size~400.
Finally, we work only with the three-phase HotStuff.

We use VMs on a leading cloud provider, with each replica or client running on a single VM with 16 vCPUs (Intel Xeon Broadwell, 2.6~GHz, or Intel Xeon Skylake, 2.7~GHz), 32 GB RAM, and SSD local storage. We use a varying number of VMs -- up to 40 replicas and 32 clients. All experiments are done over the LAN inside one data center, with a RTT of less than 1~ms. As this setup eliminates most network delays, it exposes the overhead added by the generalized BQS code. For the same reason, we use only zero-sized request and response payloads. In realistic deployments (on a wide-area network and with significant payload data), the extra cost of generalized quorums would be less visible. All measurements are made on the client. Finally, the maximum available bandwidth among the VMs was measured by \emph{iperf} as 1--2 Gbits per second.

\dotparagraph{Throughput vs.\ latency}
We first measure throughput and latency in a small system with four replicas, with the goal of comparing the behavior of the four different quorum-system implementations.
We use a threshold BQS because all four protocols can be instantiated with it, that is, in \off, this is specified by two numbers, $n=4$ and $f=1$, in \form~by the $\Theta_3^4 (\CP)$ MBF, and in the last two protocols by an MSP implementing the $\Theta_3^4 (\CP)$ access structure.
The reported values were produced by first fixing the request rate per client and increasing the number of clients from one to eight and then, with the number of clients fixed at eight, increasing the request rate even further for each of them, until the system saturates.
The result is depicted in Figure~\ref{fig:thr_lat_4}.

All four protocols exhibit similar behavior.
\off~saturates at 188.4K tx/sec, followed by \form~at 179.3K tx/sec, which is less than 5\% lower. 
The peak throughput of MSP-based protocols are slightly lower.
Specifically, \mix~delivers 175.5K tx/sec before saturation, which translates to an overhead of almost 7\% compared to \off,
while \on~reaches roughly 167.8K tx/sec, for an overhead of 11\%.
The latency at the saturation point is about 11.5ms for all protocols.
We conclude that in a small system like this, with four parties, generalizing a protocol does not significantly impact its efficiency.

\begin{figure}\vspace*{-3mm}
\begin{center}
	\includegraphics[width=0.5\textwidth]{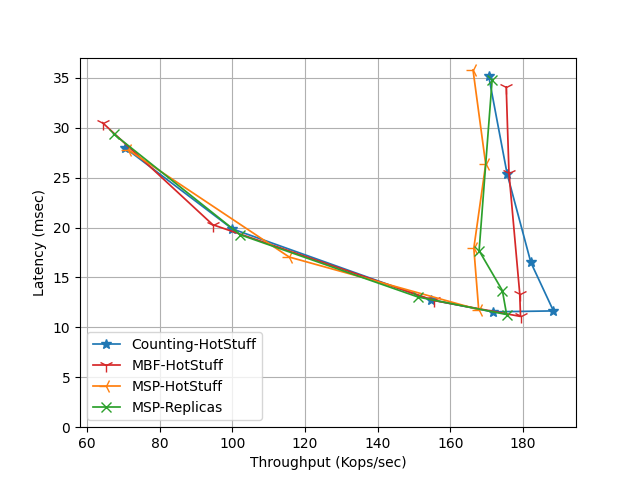}
\end{center}
\vspace*{-3mm}\caption{Throughput vs.\ latency for 1--8 clients and different implementations of the threshold BQS with four replicas.}\label{fig:thr_lat_4}
\end{figure}

\dotparagraph{Scalability}
In this evaluation we measure the throughput and latency in a system with a varying number of replicas. We use $n = 3f + 1$ replicas, for $f \in \{1, \ldots, 10\}$, and a varying number of clients. 
The trust assumption is again a threshold quorum system with $n$ replicas, of which up to $f$ may fail, specified in the appropriate way for each system. 
For each $n$ we increase the request rate per client and report the throughput and latency just before saturation. The question we want to answer is how the generalized protocols (\form, \on, \mix) scale in comparison to \off.  
The results are shown in Figures~\ref{fig:thr} (throughput) and~\ref{fig:lat} (latency).

We notice that \off~and \form~scale up almost identically. 
In a system with 31 replicas they achieve a throughput of 80.5K and 78.7K tx/sec, respectively, with latencies of 29.6ms and 26ms.
\mix~achieves throughput and latency very similar to \off~for low values of $n$ and comparable to \off~for higher $n$.
At $n=13$ the throughput of \mix~is 9\% lower than that of \off, while the latency is only 4\% higher.
With $n=31$, throughput and latency of \mix~lie both approximately 35\% behind the numbers for \off.
We conclude that the overhead added by the MSP-based quorum-checking code is relatively small for the replicas, considering all the other tasks they have to carry out, such as signature evaluation and message processing, especially when batching is used.
However, the protocol where both the replicas and the clients use MSPs does not scale so well. This is because clients do not use batching but operate on the MSP matrix for every received response. Moreover, in the original HotStuff prototype implementation, the clients do not verify the signatures on the response messages at all (!) and therefore, this operation is very fast and lets the overhead of the MSP appear large.  With signature verification enabled, as in a production system, additional cost incurred by the MSP representation would be much less visible.

\begin{figure}
	\centering
	\subfloat[Throughput]{\includegraphics[width=0.5\textwidth]{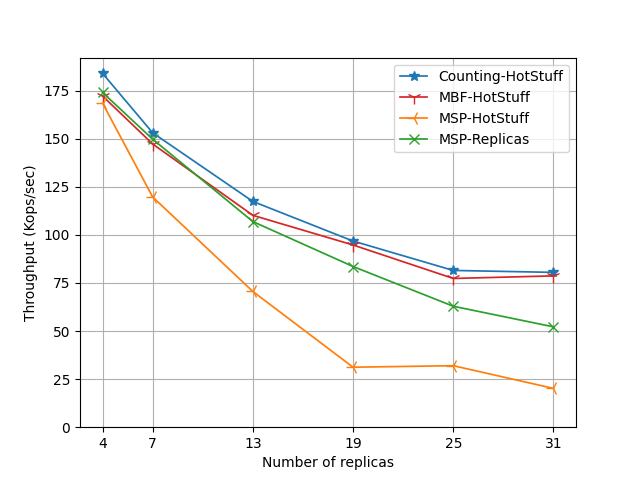}\label{fig:thr}}
        \subfloat[Latency]{\includegraphics[width=0.5\textwidth]{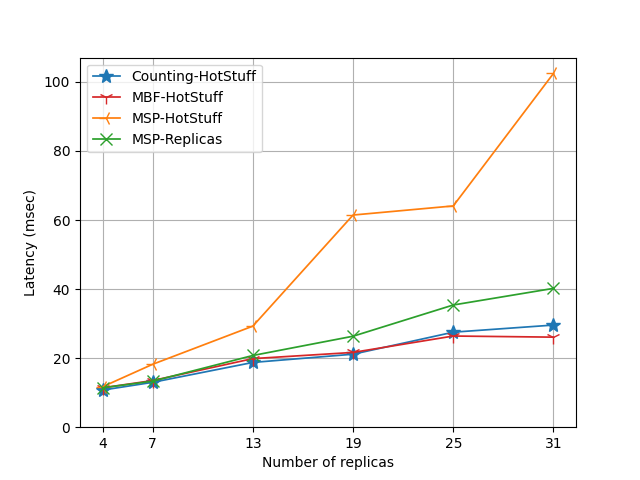}\label{fig:lat}}
	\caption{Scalability of the four protocols when instantiated with a threshold BQS.}
\end{figure}     

\dotparagraph{Scalability with generalized Byzantine quorum systems}
We now evaluate the protocols beyond threshold BQS. The question we want to answer with this benchmark is how they scale when instantiated with a generalized BQS, in comparison to when instantiated with a threshold BQS. 
We focus on \form~and \mix, which perform best in the previous experiments, and run them on two different families of BQS. The first is the 2-layered-1-common  generalized BQS presented in Section~\ref{sec:techniques}, and the second is a threshold BQS. For \onecom~we vary the parameter $k$ from 4 to 10, resulting in a system with $4k$ parties, while the threshold BQS is specified by the MBF $ \Theta_{\ceil{\frac{2n + 1}{3}}}^n \left( p_1, \ldots, p_n \right)$, for $n=4k$. 
We do not consider \off~in this benchmark because it cannot be instantiated with the generalized BQS.

We first report a direct comparison between the MBF method and the MSP method for encoding a generalized BQS. In Figure~\ref{fig:microbench} we show the memory required by each replica to store the BQS specification and the average time needed to check whether a set (chosen uniformly at random and repeated 10000 times) is a quorum, based on our implementation. Both \form~and \mix~are considered, instantiated with both the \onecom~and the threshold BQS. The MBF-based encoding is far more efficient than the MSP implementation, both in terms of memory consumption and evaluation time.

In Figures~\ref{fig:1com_thr} and~\ref{fig:1com_lat} we report the throughput and latency, respectively. In this experiment we run two replica instances in every VM, so the values reported here are overall lower than in the previous benchmarks. The performance of \mix~when running with the generalized and the threshold quorum specifications is similar. This is because in both cases the replicas have to perform Gaussian elimination on matrices of comparable dimensions. 
\form~also scales in a similar way for both families of trust assumptions, but this benchmark shows that its efficiency is slightly affected by the specified BQS.
This is, first, because generalized BQS are implemented by longer monotone Boolean formulas,
but also because generalized BQS have a (sometimes much) smaller number of quorums than threshold BQS, which might affect the leader when waiting for a quorum of votes.
It is worth to mention that the MBF-based protocols perform better than the MSP-based ones also in this benchmark.

\begin{figure}\vspace*{-3mm}
\begin{center} 	
	\includegraphics[width=0.5\textwidth]{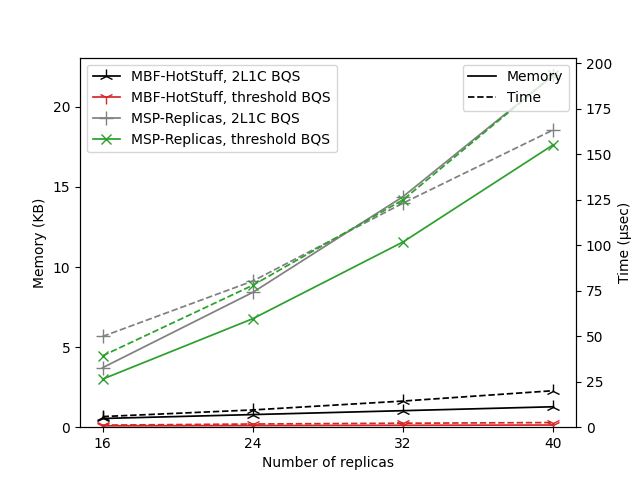}
\end{center}
\vspace*{-3mm}\caption{Memory and time required to store the BQS and check for quorums for the MSP-based and MBF-based implementations, when instantiated with two different trust assumptions, the generalized \onecom~for $k = 4, \ldots, 10$, resulting in $4k$ parties, and the $2/3$ Byzantine threshold on a set of $4k$ parties.}
\label{fig:microbench}
\end{figure}

\begin{figure}[ht]
	\centering
	\subfloat[Throughput]{\includegraphics[width=0.5\textwidth]{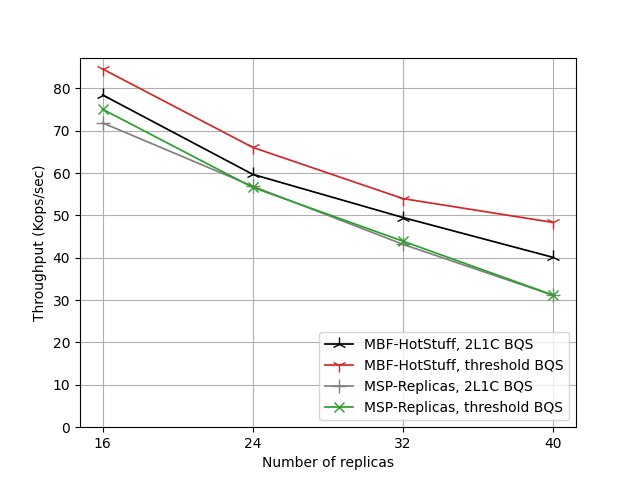}\label{fig:1com_thr}}
        \subfloat[Latency]{\includegraphics[width=0.5\textwidth]{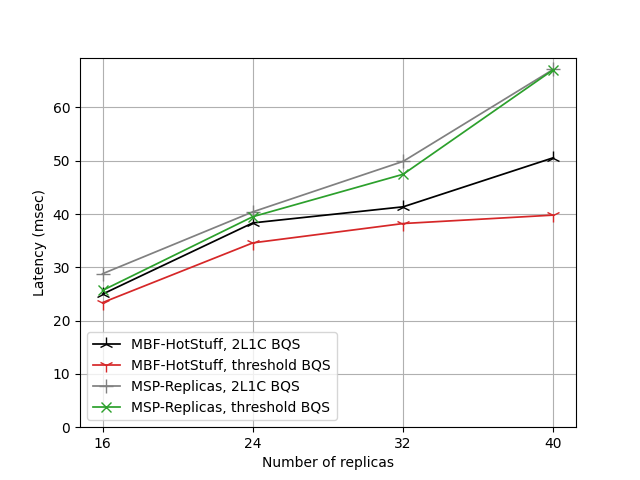}\label{fig:1com_lat}}
	\caption{Scalability of \form~and \mix~when running under two different trust assumptions, the generalized \onecom~for $k = 4, \ldots, 10$, resulting in $4k$ parties, and the $2/3$ Byzantine threshold on a set of $4k$ parties.}
\end{figure}

\dotparagraph{Discussion}
Our benchmarks illustrate the added value of generalized BQS and demonstrate that they have small overhead. One can therefore specify complex, non-threshold trust assumptions in SMR protocols without significantly sacrificing efficiency. 
The MBF-based protocol performs consistently better than the MSP-based, which can be expected due to the higher implementation complexity.
The performance of the MBF-based protocol was identical or comparable to the original threshold HotStuff, although it can be slightly affected by the complexity of the BQS, since more complex trust assumptions result in longer formulas.
The protocol where both the replicas and the clients use the MSP does not scale well and can only be used in small systems.
Nonetheless, in applications where all the nodes participate in the protocol, i.e., clients are not disjoint from servers, encoding the BQS as an MSP also results in high efficiency, as was shown by \mix~in the benchmarks. 
We anticipate that our work will pave the way for more protocols generalizing threshold trust assumptions. This can be combined with the novel ideas presented in the BFT literature, e.g., the combination of crash and Byzantine faults~\cite{DBLP:conf/opodis/CachinSV16} or with peer-to-peer gossip~\cite{DBLP:journals/corr/abs-1807-04938}.

\section*{Acknowledgments}

This work has been funded by the Swiss National Science Foundation (SNSF)
under grant agreement Nr\@.~200021\_188443 (Advanced Consensus Protocols).

\bibliography{references, dblpbibtex}
\bibliographystyle{ieeesort}

\vfill
\appendix

\section{Implemented HotStuff}\label{sec:impl_hs}

The implemented version of HotStuff using generalized quorums is presented in Algorithm \ref{alg:implemented-hotstuff}.
The call $\isQuorum(\votes[b])$ checks whether the replicas in $\votes[b]$ constitute a quorum, using our algorithm described in \ref{sec:encode_msp}.
The function \getLeader is not defined in HotStuff but is specified by the application. Procedure \onBeat is also called by the leader in order to propose new clients' commands at points specified by the application.

\begin{algo*}
\vbox{
\small
\begin{numbertabbing}\reset
 xxxx\=xxxx\=xxxx\=xxxx\=xxxx\=xxxx\=MMMMMMMMMMMMMMMMMMM\=\kill
 \textbf{State} \\
 \> \(\prepareQC \gets \perp\); \(\lockedQC \gets \perp\); \(\curView \gets 1 \) \\
 \\
 \textbf{procedure} \createLeaf \( \left(  \parent, \cmd, \qc, \height  \right) \) \label{}\\
 \> \( b \gets \text{ new node}  \) \label{}\\ 
 \> \( b.\parent \gets \parent \); \( b.\cmd \gets \cmd \) \label{}\\
 \> \( b.\justify \gets \qc \); \( b.\height \gets \height \) \label{} \\
 \>\textbf{return} $b$\label{} \\
 \\
 \textbf{procedure} \update \( \left(  b^\ast  \right) \)\label{} \\
 \> \( b^{\prime\prime} \gets b^\ast.\justify.\node  \); \( b^{\prime} \gets b^{\prime\prime}.\justify.\node  \); \( b \gets b^{\prime}.\justify.\node  \) \label{}\\ 
 \> \( \updateQCHigh \left(  b^\ast.\justify  \right) \) //PRE-COMMIT phase on $ b^{\prime\prime}$ \label{}\\
 \>\textbf{if} \( b^\prime.\height > b_{\text{lock}}.\height \) \textbf{then} //COMMIT phase on $ b^{\prime}$ \label{}\\
 \>\> \( b_{\text{lock}} \gets b^\prime \) \label{}\\
 \>\textbf{if} \( b^{\prime\prime}.\parent = b^{\prime} \textbf{ and } b^{\prime}.\parent = b \) \textbf{then} //DECIDE phase on $ b$ \label{}\\
 \>\> \( \onCommit(b) \) \label{}\\
 \>\> \( b_{\text{exec}} \gets b \) \label{}\\
 \\
 \textbf{procedure} \onCommit \( \left(  b  \right) \) \label{}\\
 \>\textbf{if} \( b_{\text{exec}}.\height <  b.\height \) \textbf{then} \label{}\\
 \>\> \( \onCommit (b.\parent) \)\label{} \\
 \> \( \execute(b.\cmd) \) \label{}\\
 \\
 \textbf{procedure} \onReceiveProposal \( \left(  m = \msg{\str{generic}}{b_{\text{new}}, \perp} \right) \) \label{}\\
 \>\textbf{if} \( b_{\text{new}}.\height > \var{vheight} \textbf{ and } 
 \left( b_{\text{new}} \text{ extends } b_{\text{lock}} \textbf{ or } b_{\text{new}}.\justify.\node.\height > b_{\text{lock}}.\height \right) \) \textbf{then} \label{}\\
 \>\> \( \var{vheight} \gets b_{\text{new}}.\height\)\label{} \\
 \>\> send message \(  \msg{\str{generic-vote}}{b_{\text{new}}, \perp}  \) to $\getLeader()$ \label{} \\
 \> \( \update(b_{\text{new}}) \) \label{}\\
 \\
 \textbf{procedure} \onReceiveVote \( \left( m = \msg{\str{generic-vote}}{b, \perp} \right) \) \text{ from } $p_j$ \label{}\\
 \> \( \votes[b] \gets \votes[b] \cup \{ \langle j, m.\sig \rangle \} \)\label{} \\
 \>\textbf{if } \( \isQuorum(\votes[b]) \) \textbf{then} \label{}\\
 \>\> \( \qc \gets \QC \left( \lbrace v_j \rbrace_{j=1}^k \right) \)\label{} \\
 \>\> \( \updateQCHigh(\qc) \) \label{}\\
 \\
 \textbf{function} \onPropose \( \left(   b_{\text{new}}, \cmd, \qc_{\text{high}}  \right) \) \label{}\\
 \> \(  b_{\text{new}} \gets \createLeaf  \left(   b_{\text{leaf}}, \cmd, \qc_{\text{high}}, b_{\text{leaf}}.\height + 1  \right)  \) \label{}\\
 \> send message \( \msg{\str{generic}}{ b_{\text{new}}, \perp} \) to all $p_{j} \in \CP$ \label{} \\
 \>\textbf{return} \( b_{\text{new}} \) \label{}\\
 \\
 \textbf{procedure} \( \updateQCHigh \left( \qc_{\text{high}}^\prime \right) \) \label{}\\
 \>\textbf{if } \(  \qc_{\text{high}}^\prime .\node.\height >  \qc_{\text{high}}.\node.\height \) \textbf{then} \label{} \\
 \>\> \(  \qc_{\text{high}} \gets  \qc_{\text{high}}^\prime  \) \label{}\\
 \>\> \( b_\text{leaf} \gets  \qc_{\text{high}}.\node \) \label{}\\
 \\ 
 \textbf{procedure} \( \onBeat \left( \cmd \right) \) \label{}\\
 \>\textbf{if } \(  i = \getLeader() \) \textbf{then} \label{}\\
 \>\> \( b_\text{leaf} \gets \onPropose \left(  b_\text{leaf}, \cmd, \qc_{\text{high}} \right) \) \label{}\\
 \\ 
 \textbf{procedure} \( \onNextSyncView \left( \cmd \right) \)\label{} \\
 \> send message \( \msg{\str{new-view}}{\perp, \qc_{\text{high}} } \) to $\getLeader()$  \label{}\\
 \\
 \textbf{procedure} \( \onReceiveNewView \left( \msg{\str{new-view}}{\perp,  \qc_{\text{high}}^\prime } \right) \)\label{} \\
 \> \( \updateQCHigh \left( \qc_{\text{high}}^\prime \right) \) \label{}
 \end{numbertabbing}
}
\caption{Implemented HotStuff, code for process $p_i$}
\label{alg:implemented-hotstuff}
\end{algo*}

\end{document}